\newif\ifprocs
\newtheorem*{rep@theorem}{\rep@title} \newcommand{\newreptheorem}[2]{%
\newenvironment{rep#1}[1]{%
\def\rep@title{\bf #2 \ref{##1} }%
\begin{rep@theorem} }%
{\end{rep@theorem} } }
\let\oldproof\proof%
\let\endoldproof\endproof%
\renewenvironment{proof}
  {\oldproof}{\hfill\squareforqed\endoldproof}
\newtheorem{theorem}{Theorem}[section]
\newtheorem{lemma}[theorem]{Lemma}
\newtheorem{definition}{Definition}[section]
\newtheorem{claim}[theorem]{Claim}
\theoremstyle{definition}
\newtheorem{update}{Bibliographic Update}[section]
\def\compactify{\itemsep=0pt \topsep=0pt \partopsep=0pt \parsep=0pt}
\newcommand {\ignore} [1] {}
\DeclareMathOperator{\pa}{\texttt{Par}}
\DeclareMathOperator{\cut}{\texttt{Cut}}
\DeclareMathOperator{\mincut}{\texttt{mincut}}
\DeclareMathOperator{\distinct}{}
\DeclareMathOperator{\dash}{-}
\DeclareMathOperator{\redundancy}{redundancy}
\providecommand{\eqdef}{:=}
\providecommand{\aset}[1]{\{#1\}}
\providecommand{\card}[1]{\lvert#1\rvert}
\newcommand{\gcut}{{\sc Group-Cut}\xspace}
\newcommand{\mwaycut}{{\sc Multiway-Cut}\xspace}
\newcommand{\mcut}{{\sc Multicut}\xspace}
\newcommand{\gcutk}{{\sc $(\alpha,\beta)$-Group-Cut}\xspace}
\newcommand{\gcutab}{{\sc $(\alpha,\beta)$-Group-Cut}\xspace}
\newcommand{\gcutlower}{{\sc $(2,1)$-Group-Cut}\xspace}
\newcommand{\mwaycutk}{{\sc $k$-Multiway-Cut}\xspace}
\newcommand{\mcutk}{{\sc $k$-Multicut}\xspace}
\def\ala{\`a la\xspace}
\def\Benczur{Bencz\'{u}r\xspace}
\newcommand{\etal}{{\em et al.\ }\xspace}
\def \calQ {{\cal Q}}
\def \calM {{\cal M}}
\definecolor{orange}{rgb}{1,0.65,0}
\newcommand{\one}{v_0}
\title{Tight Bounds for Gomory-Hu-like \\ Cut Counting%
\thanks{This work was supported in part by the Israel Science Foundation (grant
\#897/13).}
}
\author{Rajesh Chitnis
\and Lior Kamma
\and Robert Krauthgamer
}
\institute{Weizmann Institute of Science, Rehovot, Israel. \email{\{rajesh.chitnis,lior.kamma,robert.krauthgamer\}@weizmann.ac.il}
}
\title{Tight Bounds for Gomory-Hu-like Cut Counting%
\thanks{This version contains additional references to previous work
(which have some overlap with our results),
see Bibliographic Update~\ref{hassin-update}.}
}
\author{Rajesh Chitnis%
\thanks{Supported by the I-CORE Program of the Planning and Budgeting Committee and The Israel Science Foundation (grant No. 4/11).}
\qquad Lior Kamma%
\thanks{This work was supported in part by the Israel Science Foundation (grant
\#897/13).}
\qquad Robert Krauthgamer\footnotemark[\value{footnote}]
\\
Weizmann Institute of Science
\\
\texttt{\small \{rajesh.chitnis,lior.kamma,robert.krauthgamer\}@weizmann.ac.il}
}
\begin{document}

\maketitle

\begin{abstract}
By a classical result of Gomory and Hu (1961),
in every edge-weighted graph $G=(V,E,w)$,
the minimum $st$-cut values, when ranging over all $s,t\in V$,
take at most $|V|-1$ distinct values.
That is, these $\binom{|V|}{2}$ instances exhibit
\emph{redundancy factor} $\Omega(|V|)$.
They further showed how to construct from $G$
a tree $(V,E',w')$ that stores all minimum $st$-cut values.
Motivated by this result, we obtain \emph{tight} bounds for the redundancy factor of several generalizations of the minimum $st$-cut problem.

\begin{enumerate} \compactify
\item
\gcut: Consider the minimum $(A,B)$-cut, ranging over
all subsets $A,B\subseteq V$ of given sizes $|A|=\alpha$ and $|B|=\beta$.
The redundancy factor is $\Omega_{\alpha,\beta}(|V|)$.
\item
\mwaycut: Consider the minimum cut separating every two vertices
of $S\subseteq V$, ranging over all subsets of a given size $|S|=k$.
The redundancy factor is $\Omega_{k}(|V|)$.
\item
\mcut: Consider the minimum cut separating every demand-pair
in $D\subseteq V\times V$, ranging over collections of $|D|=k$ demand pairs.
The redundancy factor is $\Omega_{k}(|V|^k)$.
This result is a bit surprising, as the redundancy factor is much larger
than in the first two problems.
\end{enumerate}

A natural application of these bounds is to construct small data structures that stores all relevant cut values, \ala the Gomory-Hu tree.
We initiate this direction by giving some upper and lower bounds.
\end{abstract}

\section{Introduction}

One of the most fundamental combinatorial optimization problems
is \emph{minimum $st$-cut},
where given an edge-weighted graph $G=(V,E,w)$ and two vertices $s,t\in V$,
the goal is to find a set of edges of minimum total weight that separates $s,t$
(meaning that removing these edges from $G$ ensures there is no $s\dash t$ path).
This problem was studied extensively,
see e.g.~the famous minimum-cut/maximum-flow duality~\cite{ford1956maximal},
and can be solved in polynomial time.
It has numerous theoretical applications,
such as bipartite matching and edge-disjoint paths,
in addition to being extremely useful in many practical settings,
including network connectivity, network reliability, and image segmentation,
see e.g.~\cite{AMO93} for details.
Several generalizations of the problem, such as multiway cut, multicut, and $k$-cut, have been well-studied in operations research and theoretical computer science.

In every graph $G=(V,E,w)$, there are in total $\binom{|V|}{2}$ instances of the minimum $st$-cut problem, given by all pairs $s,t\in V$.
Potentially, each of these instances could have a different value for the minimum cut.
However, the seminal work of Gomory and Hu~\cite{GH61} discovered
that \emph{undirected} graphs admit a significantly stronger bound
(see also~\cite[Lemma 8.15]{AMO93} or \cite[Section 3.5.2]{CCPS98}).
\begin{theorem}[\cite{GH61}]
\label{thm:GH}
Let $G=(V,E,w)$ be an edge-weighted undirected graph.
Then the number of distinct values over all possible $\binom{|V|}{2}$ instances of the minimum $st$-cut problem is at most $|V|-1$.
\end{theorem}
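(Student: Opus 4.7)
The plan is to prove something stronger than mere counting: I will construct a weighted tree $T=(V,E',w')$ on the same vertex set such that for every pair $s,t\in V$, the minimum $st$-cut value in $G$ equals the minimum edge weight on the unique $s$--$t$ path in $T$. Since $T$ has exactly $|V|-1$ edges, the set of distinct minimum $st$-cut values is contained in the set of edge weights of $T$, giving the desired bound.

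I would build $T$ iteratively. Maintain a partition $\calP$ of $V$ into ``super-nodes'' together with a tree structure on $\calP$ whose edges are weighted by actual cut values computed so far; initially $\calP=\{V\}$ and the tree is trivial. While some super-node $S\in\calP$ has $|S|\ge 2$, pick two representatives $s,t\in S$, form the auxiliary graph $G_S$ by contracting each other super-node $S'\ne S$ into a single vertex (preserving edge weights), compute a minimum $st$-cut $(A,V\setminus A)$ in $G_S$, split $S$ into $S\cap A$ and $S\setminus A$, and attach them as two super-nodes connected by a new tree edge whose weight is the value of this cut, re-hanging the other super-nodes on whichever side their contracted image lies. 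After $|V|-1$ such splits, every super-node is a singleton and we obtain a tree on $V$.

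The key technical ingredient, and the main obstacle, is the \emph{non-crossing cut lemma}, which is proved via submodularity of the cut function: for any $X,Y\subseteq V$,
\[
w(X,V\setminus X)+w(Y,V\setminus Y)\;\ge\;w(X\cap Y,V\setminus(X\cap Y))+w(X\cup Y,V\setminus(X\cup Y)).
\]
From this one shows that if $(A,V\setminus A)$ is a minimum $st$-cut and $u,v$ both lie in $A$, then there exists a minimum $uv$-cut $(B,V\setminus B)$ with $B\subseteq A$. This is what justifies the contraction step: a minimum $uv$-cut for $u,v$ in the same super-node $S$ can be chosen to respect all previously committed cuts, so it corresponds to a cut in the contracted graph $G_S$ of the same value. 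Verifying this carefully, by induction on the iteration, is the heart of the argument.

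Finally I would verify the ``path-minimum'' property. For the easy direction, every edge of $T$ represents a genuine cut in $G$, so the minimum-weight edge on the $s$--$t$ path of $T$ gives an $st$-cut of that value, proving $\mathrm{mincut}(s,t)\le \min_{e\in \text{path}}w'(e)$. For the reverse inequality, one walks along the $s$--$t$ path $s=v_0,v_1,\dots,v_\ell=t$ in $T$ and applies $\mathrm{mincut}(s,t)\ge \min_i \mathrm{mincut}(v_{i-1},v_i)$, which follows because any $st$-cut must separate some consecutive pair on the path; combining with the fact that $w'(v_{i-1}v_i)=\mathrm{mincut}(v_{i-1},v_i)$ by construction yields the claim. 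The counting bound is then immediate.
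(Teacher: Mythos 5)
Your overall plan --- build the Gomory--Hu flow-equivalent tree and read the bound off its $|V|-1$ edges --- is exactly the construction the paper cites from \cite{GH61} for this theorem, and the submodularity/uncrossing lemma you isolate is the right tool for justifying the contraction step. The one genuine gap is in your final verification, where you assert that $w'(v_{i-1}v_i)=\mathrm{mincut}(v_{i-1},v_i)$ holds ``by construction.'' It does not: the tree edge $e$ that ends up between the singletons $\{u\}$ and $\{v\}$ was created at some earlier iteration as a minimum $s't'$-cut for whatever representatives $s',t'$ were chosen then, and after subsequent splits and re-hangings $u$ and $v$ need not coincide with $s'$ and $t'$. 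The direction $\mathrm{mincut}(u,v)\le w'(e)$ is easy, since the bipartition induced by removing $e$ from $T$ separates $u$ from $v$ and has weight $w'(e)$; but the reverse inequality $\mathrm{mincut}(u,v)\ge \mathrm{mincut}(s',t')$ is precisely the ``representative lemma'' of Gomory and Hu and needs its own induction --- e.g., maintain the invariant that every current tree edge $(P,Q)$ has witnesses $p\in P$, $q\in Q$ with $w'(e)=\mathrm{mincut}(p,q)$, and show via the uncrossing lemma that a split preserves it. This is exactly the step where informal write-ups of the construction usually break, so it cannot be dismissed as immediate.

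If all you want is the counting statement (equivalently, flow-equivalence rather than cut-equivalence), there is a much lighter route that sidesteps contractions and uncrossing entirely: the elementary inequality $\mathrm{mincut}(u,w)\ge\min\{\mathrm{mincut}(u,v),\mathrm{mincut}(v,w)\}$ (any $uw$-cut leaves $v$ on one side, hence is already a $uv$-cut or a $vw$-cut) implies that a maximum-weight spanning tree of the complete graph on $V$ under the weights $\mathrm{mincut}(\cdot,\cdot)$ satisfies the path-minimum property you need: the lower bound on $\mathrm{mincut}(s,t)$ comes from iterating the inequality along the tree path, and the upper bound from the exchange argument for maximum spanning trees. The set of attained values is then contained in the set of $|V|-1$ tree edge weights, which is the theorem.
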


The beautiful argument of Gomory and Hu shows the existence of
a tree $\mathcal{T}=(V,E',w')$, usually called a \emph{flow-equivalent tree},
such that for every $s,t\in V$ the minimum $st$-cut value in $\mathcal T$
is exactly the same as in $G$.
(They further show how to construct a so-called \emph{cut-equivalent tree},
which has the stronger property that every vertex-partitioning that attains
a minimum $st$-cut in $\mathcal T$, also attains a minimum $st$-cut in $G$;
see Section~\ref{sec:related} for more details on this and related work.)
Every $G$ which is a tree (e.g., a path) with distinct edge weights has
exactly $|V|-1$ distinct values,
and hence the Gomory-Hu bound is existentially tight.

Another way to state Theorem~\ref{thm:GH} is that there is always a huge redundancy between the $\binom{|V|}{2}$ minimum $st$-cut instances in a graph.
More precisely, the ``redundancy factor'',
measured as the ratio between the number of instances and
the number of distinct optimal values attained by them,
is always $\Omega(|V|)$.
We study this question of redundancy factor for the following generalizations of minimum $st$-cut.
Let $G=(V,E,w)$ be an undirected edge-weighted graph.
\begin{itemize} \compactify
\item \textbf{\gcut}:
Given two disjoint sets $A,B\subseteq V$ find a minimum $(A,B)$-cut,
i.e., a set of edges of minimum weight that separates every vertex in $A$
from every vertex in $B$.

\item \textbf{\mwaycut}: Given $S\subseteq V$ find a minimum-weight set of edges, whose removal ensures that for every $s\neq s'\in S$
there is no $s \dash s'$ path.

\item \textbf{\mcut}: Given $Q\subseteq V\times V$ find a minimum-weight set of edges, whose removal ensures that for every $(q,q')\in Q$ there is no $q \dash q'$ path.
\end{itemize}

In order to present our results about the redundancy in these cut problems
in a streamlined way,
we introduce next the terminology of vertex partitions and demand graphs.

\paragraph{Cut Problems via Demand Graphs.}
Denote by $\pa(V)$ the set of all partitions of $V$,
where a {\em partition} of $V$ is, as usual,
a collection of pairwise disjoint subsets of $V$ whose union is $V$.
Given a partition $\Pi \in \pa(V)$ and a vertex $v \in V$,
denote by $\Pi(v)$ the unique $S \in \Pi$ satisfying $v \in S$.
Given a graph $G=(V,E,w)$,
define the function $\cut_G:\pa(V)\rightarrow \mathbb{R}^{\geq 0}$
to be $\cut_G(\Pi)= \sum_{uv\in E\,:\,\Pi(u)\neq \Pi(v)} w(uv)$.
We shall usually omit the subscript $G$,
since the graph will be fixed and clear from the context.

Cut problems as above can be defined by specifying the graph $G$ and
a collection $D$ of \emph{demands},
which are the vertex pairs that need to be separated.
We can view $(V,D)$ as an (undirected and unweighted) \emph{demand graph},
and by slight abuse of notation, $D$ will denote both this graph and its edges.
For example, an instance of \gcut is defined by $G$ and demands that form
a complete bipartite graph $K_{A,B}$ (to formally view it as a graph on $V$,
let us add that vertices outside of $A\cup B$ are isolated).
We say that partition $\Pi \in \pa(V)$ \emph{agrees} with $D$
if every $uv \in D$ satisfies $\Pi(u) \ne \Pi(v)$.
The optimal cut-value for the instance defined by $G$ and $D$
is given by
$$
  \mincut_G(D)
  \eqdef \min \aset{\cut_G(\Pi) : \text{$\Pi \in \pa(V)$ agrees with $D$} }.
$$

\paragraph{Redundancy among Multiple Instances.}
We study multiple instances on the same graph $G=(V,E,w)$
by considering a family $\mathcal D$ of demand graphs.
For example, all minimum $st$-cut instances in a single $G$
corresponds to the family $\mathcal D$ of all demands of the form $D=\aset{(s,t)}$ (i.e., demand graph with one edge).
The collection of optimal cut-values over the entire family $\mathcal D$
of instances in a single graph $G$,
is simply $\aset{ \mincut(D): D\in \mathcal{D} }$.
We are interested in the ratio between the size of this collection
as a multiset and its size as a set,
i.e., with and without counting multiplicities.
Equivalently, we define the \emph{redundancy factor}
of a family $\mathcal D$ of demand graphs to be
$$
  \redundancy(\mathcal D)
  \eqdef \frac{\card{\mathcal D}} {\card{\aset{\mincut(D): D\in \mathcal D}}} \ ,
$$
where throughout, $\card{A}$ denotes the size of $A$ as a \emph{set},
i.e., ignoring multiplicities.

\paragraph{Motivation and Potential Applications.}
A natural application of the redundancy factor is to construct small data structures that stores all relevant cut values.
For the minimum $st$-cut problem,
Gomory and Hu were able to collect all the cut values into a tree on the same vertex set $V$.
This tree can easily support fast query time,
or a distributed implementation (labeling scheme) \cite{KKKP05}.

In addition, large redundancy implies that there is a small collection of cuts
that contains a minimum cut for each demand graph.
Indeed, first make sure all cut values in $G$ are distinct
(e.g., break ties consistently by perturbing edge weights),
and then pick for each cut-value in $\aset{\mincut(D): D\in \mathcal D}$
just one cut that realizes it.
This yields a data structure that reports,
given demands $D\in\mathcal D$,
a vertex partition that forms a minimum cut (see more in Section~\ref{sec:extensions}).

\subsection{Main Results}
\label{subsection:main-results}

Throughout, we denote $n=|V|$.
We use the notation $O_{\gamma}(\cdot)$ to suppress factors that depend
only on $\gamma$, and similarly for $\Omega$ and $\Theta$.

\paragraph{The \gcut problem.}
In this problem, the demand graph is a complete bipartite graph $K_{A,B}$
for some subsets $A,B\subset V$.
We give a tight bound on the redundancy factor of the family of all instances
where $A$ and $B$ are of given sizes $\alpha$ and $\beta$, respectively.
The special case $\alpha=\beta=1$ is just all minimum $st$-cuts in $G$,
and thus recovers the Gomory-Hu bound (Theorem~\ref{thm:GH}).

\begin{theorem}
\label{thm:group}
For every graph $G=(V,E,w)$ and $\alpha,\beta\in \mathbb{N}$, we have
$\card{\aset{ \mincut(K_{A,B}): |A|=\alpha, |B|=\beta }}
 = O_{\alpha,\beta}(n^{\alpha+\beta-1})$,
hence the family of $(\alpha,\beta)$-group-cuts
has redundancy factor $\Omega_{\alpha,\beta}(n)$.
Furthermore, this bound is existentially tight (attained by some graph $G$) for all $\alpha$, $\beta$ and $n$.
\end{theorem}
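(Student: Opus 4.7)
I would prove the upper bound by induction on $\alpha+\beta$, with the base case $\alpha=\beta=1$ being Theorem~\ref{thm:GH}. The inductive step proceeds via contraction to a restricted $(2,2)$-group-cut: for each choice of anchor sets $A'\subseteq A$ and $B'\subseteq B$ with $|A'|=\alpha-1$ and $|B'|=\beta-1$, let $H_{A',B'}$ denote the graph obtained from $G$ by contracting $A'$ into a single vertex $s$ and $B'$ into a single vertex $t$. Writing $a=A\setminus A'$ and $b=B\setminus B'$, we have
\[
  \mincut_G(K_{A,B}) \;=\; \mincut_{H_{A',B'}}\!\bigl(K_{\{s,a\},\{t,b\}}\bigr).
\]
As the number of anchor pairs is $O_{\alpha,\beta}(n^{\alpha+\beta-2})$, the theorem reduces to the following per-anchor claim: for every edge-weighted graph $H$ with two designated vertices $s,t$, the number of distinct values $\mincut_H(K_{\{s,a\},\{t,b\}})$, as $(a,b)$ varies over $V(H)\setminus\{s,t\}$, is $O(|V(H)|)$.

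\textbf{Main obstacle.} Establishing this per-anchor bound is the core of the argument: the restricted $(2,2)$-group-cut family has $\Theta(n^2)$ instances, and we need a further factor-$n$ Gomory-Hu-type savings to reduce to $O(n)$ distinct values. This does \emph{not} follow from a single Gomory-Hu tree $\mathcal T_H$ of $H$, since a minimum $(\{s,a\},\{t,b\})$-cut need not coincide with any single tree-edge cut; a simple obstruction is a $4$-cycle in which $\{s,a\}$ and $\{t,b\}$ form the two diagonals, where the minimum cut removes every edge. I anticipate that the proof combines $\mathcal T_H$ with information from secondary Gomory-Hu trees on contractions of the form $H/\{s,a\}$, and argues via a case analysis on the positions of $a$ and $b$ in these trees that only $O(|V(H)|)$ distinct cut values arise.

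\textbf{Tightness.} For the existential lower bound, I would verify tightness on the weighted path $P_n=v_1\text{-}v_2\text{-}\cdots\text{-}v_n$ with generic edge weights $w_1,\ldots,w_{n-1}$. Sorting $A\cup B$ along the path as $c_1<c_2<\cdots<c_{\alpha+\beta}$ and calling a consecutive pair $(c_i,c_{i+1})$ \emph{bichromatic} when its endpoints lie in different sets (one in $A$ and the other in $B$), a direct calculation gives
\[
  \mincut_{P_n}(K_{A,B}) \;=\; \sum_{i\,:\,(c_i,c_{i+1}) \text{ bichromatic}} \min\{w_{c_i},w_{c_i+1},\ldots,w_{c_{i+1}-1}\}.
\]
Genericity of the weights implies that distinct ``bichromatic profiles'' yield distinct sums, and an elementary counting argument shows that $\Theta_{\alpha,\beta}(n^{\alpha+\beta-1})$ distinct values arise, matching the upper bound.
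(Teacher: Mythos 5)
Your upper-bound plan has a fatal gap exactly where you flag the ``main obstacle'': the per-anchor claim is not just unproven, it is false. Take $H$ to be the path $v_1\dash v_2\dash\cdots\dash v_N$ with edge weights $w_i=2^{N-i}$ on $v_iv_{i+1}$ (strictly decreasing powers of $2$), and set $s=v_1$, $t=v_N$. For $1<j<i<N$ put $a=v_i$, $b=v_j$. Any partition agreeing with $K_{\{s,a\},\{t,b\}}$ may be assumed to be a $2$-partition with $\{v_1,v_i\}$ on one side and $\{v_j,v_N\}$ on the other, so the side must switch at least once in each of the three edge-disjoint stretches $[1,j]$, $[j,i]$, $[i,N]$; taking the cheapest (rightmost) edge of each stretch is feasible and optimal, so $\mincut_H(K_{\{s,a\},\{t,b\}})=w_{j-1}+w_{i-1}+w_{N-1}$. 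As $(i,j)$ varies this takes $\Theta(N^2)$ distinct values, not $O(N)$, so your induction only yields the trivial $O(n^{\alpha+\beta})$. (This example even arises directly from your reduction with $\alpha=\beta=2$ and singleton anchors $A'=\{v_1\}$, $B'=\{v_N\}$.) The paper's argument is structurally different and avoids any decomposition of the instance family: it associates to each demand graph $K_{A_j,B_j}$ a polynomial of degree $\alpha+\beta-1$ over $\mathbb{F}_2$ (equivalently, a row of a demands-versus-partitions incidence matrix) that is nonzero precisely on the feasible $2$-partitions, shows that demands attaining pairwise distinct $\mincut$ values give linearly independent polynomials (evaluate a putative linear dependency at the optimal partitions in increasing order of cut value), and then bounds the count by the dimension $O_{\alpha,\beta}(n^{\alpha+\beta-1})$ of the ambient polynomial space.

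Your lower bound also fails for general $\alpha,\beta$. On a single path the cut value is, as you say, a sum of range-minima over the bichromatic consecutive pairs of $A\cup B$; but a linear arrangement of $\alpha$ $A$'s and $\beta$ $B$'s has at most $2\min(\alpha,\beta)$ bichromatic adjacencies, and since the gaps are edge-disjoint the value is a sum of at most $2\min(\alpha,\beta)$ distinct edge weights, giving only $O(n^{2\min(\alpha,\beta)})$ distinct values. This matches $\Theta(n^{\alpha+\beta-1})$ only when $|\alpha-\beta|\le 1$; already for $\alpha=3,\beta=1$ the path yields $O(n^2)$ rather than the required $\Omega(n^3)$. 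The paper's construction circumvents this by gluing $\alpha$ parallel $s\dash t$ paths to $\beta-1$ parallel $t\dash u$ paths with decreasing powers of $2$ (and $\infty$-weight edges at $t$ and $u$), placing one vertex of $A$ on each $s\dash t$ path and one vertex of $B$ on each $t\dash u$ path together with $s\in B$, so that each of the $\alpha+\beta-1$ freely chosen vertices independently forces its own cut edge.
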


\paragraph{The \mwaycut problem.}
In this problem, the demand graph is a complete graph $K_S$
for some subset $S\subseteq V$.
We give a tight bound on the redundancy factor of the family of all instances
where $S$ is of a given size $k\ge 2$.
Again, the Gomory-Hu bound is recovered by the special case $k=2$.

\begin{theorem}
\label{thm-multiway}
For every graph $G=(V,E,w)$ and for every integer $k\in \mathbb{N}$, we have
$\card{\aset{ \mincut(K_{S}): |S|=k }} = O_{k}(n^{k-1})$,
hence the family of $k$-multiway-cuts has redundancy factor $\Omega_{k}(n)$.
Furthermore, this bound is existentially tight for all $n$ and $k$.
\end{theorem}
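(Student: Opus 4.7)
My plan is to establish the upper and lower bounds on $\card{\aset{ \mincut(K_{S}): |S|=k }}$ separately.

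For the upper bound, I would rely on the Gomory-Hu tree $\mathcal{T} = (V, E', w')$ of $G$ as a canonical encoding of all pairwise minimum cuts in $G$. The strategy is to show that for any $S \subseteq V$ with $|S| = k$, the value $\mincut_G(K_S)$ is determined by a combinatorial pattern of $S$ in $\mathcal{T}$ taking only $O_k(n^{k-1})$ possibilities. A natural candidate pattern is the set of $k-1$ edges of $\mathcal{T}$ that appear in the optimal multiway cut of $S$ when computed \emph{inside the tree}; since $\mathcal{T}$ has $n-1$ edges, there are at most $\binom{n-1}{k-1} = O_k(n^{k-1})$ such $(k-1)$-edge subsets, which would yield the bound.

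The main obstacle is that in general $\mincut_G(K_S) \neq \mincut_{\mathcal{T}}(K_S)$: a minimal example is $G = K_3$ with unit-weight edges, whose Gomory-Hu tree is a path of two edges of weight $2$; there $\mincut_G(K_V) = 3$ while $\mincut_{\mathcal{T}}(K_V) = 4$. The delicate part of the proof is thus to argue that, despite the possible gap between $G$ and $\mathcal{T}$, the value $\mincut_G(K_S)$ is still a function of the weights of a specific set of $O_k(1)$ edges of $\mathcal{T}$---perhaps by exploiting the laminar structure of minimum cuts encoded by the Gomory-Hu construction, or by introducing a more refined Gomory-Hu-like tree tailored to $k$-multiway cuts.

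For the lower bound, I would construct $G$ as a weighted path $v_1 - v_2 - \cdots - v_n$ with edge weights $w_1 > w_2 > \cdots > w_{n-1}$ chosen to be linearly independent over $\mathbb{Q}$. For any $S = \{v_{i_1}, \ldots, v_{i_k}\}$ with $i_1 < \cdots < i_k$, the minimum $k$-multiway cut must remove one edge from each of the $k-1$ subpaths between consecutive vertices of $S$. Because the weights decrease, the minimum-weight edge in the $j$-th subpath is its rightmost one, so $\mincut_G(K_S) = w_{i_2 - 1} + w_{i_3 - 1} + \cdots + w_{i_k - 1}$. Linear independence of the weights implies that different $(k-1)$-element subsets $\{i_2 - 1, \ldots, i_k - 1\} \subseteq \{1, \ldots, n-1\}$ give distinct sums, and every such subset is realizable by setting $i_1 = 1$. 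This yields $\binom{n-1}{k-1} = \Omega_k(n^{k-1})$ distinct values, matching the upper bound.
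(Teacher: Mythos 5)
Your lower bound is correct and is essentially the paper's own construction: a weighted path whose edge weights are strictly monotone, so that the optimal multiway cut for $S=\{v_{i_1},\dots,v_{i_k}\}$ picks the cheapest edge of each of the $k-1$ subpaths between consecutive terminals, and with weights chosen so that distinct $(k-1)$-subsets of edges have distinct total weight (the paper uses powers of $2$ rather than $\mathbb{Q}$-linear independence, which has the minor advantage of keeping the weights rational). Fixing $i_1=1$ indeed realizes every $(k-1)$-subset of the $n-1$ edges, giving $\binom{n-1}{k-1}=\Omega_k(n^{k-1})$ distinct values.

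The upper bound, however, is where the substance of the theorem lies, and your proposal has a genuine gap there. You correctly observe that $\mincut_G(K_S)\neq\mincut_{\mathcal T}(K_S)$ in general (your $K_3$ example), but you then need the claim that $\mincut_G(K_S)$ is determined by some pattern of $S$ in $\mathcal T$ with only $O_k(n^{k-1})$ possible patterns, and you offer no proof of it --- only the hope that laminarity or a ``more refined Gomory-Hu-like tree'' will save the day. Note that your own example already shows the multiway-cut value is not a function of the Gomory-Hu tree (the triangle and the path with weights $2,2$ have the same tree but multiway-cut values $3$ and $4$), so any such pattern argument must be carried out relative to $G$ itself, and it is far from clear that two sets $S,S'$ inducing the same $(k-1)$-edge pattern in $\mathcal T$ have equal $\mincut_G$ values; for $k=2$ this is exactly the Gomory-Hu theorem, but no analogue for $k\ge 3$ is established or cited. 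The paper avoids trees entirely: it forms a $\{0,1\}$-matrix over $\mathbb{F}_2$ whose rows are indexed by $k$-subsets $A$ and whose columns are indexed by partitions of $V$ into exactly $k$ parts, with entry $1$ iff the partition agrees with $K_A$; it shows the row space is spanned by the $\binom{n-1}{k-1}$ rows of sets containing a fixed vertex $\one$, and that rows whose demand graphs attain distinct optimal values are linearly independent (by evaluating at the optimal partitions in increasing order of value). You would need either to supply the missing tree-structural lemma or to switch to an argument of this linear-algebraic kind.
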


\paragraph{The \mcut problem.}
In this problem, the demand graph is a collection $D$ of demand pairs.
We give a tight bound on the redundancy factor of the family of all instances
where $D$ is of a given size $k\in \mathbb{N}$.
Again, the Gomory-Hu bound is recovered by the special case $k=1$.

\begin{theorem}
\label{thm-multicut}
For every graph $G=(V,E,w)$ and $k\in \mathbb{N}$, we have
$\card{\aset{ \mincut(D) : D\subseteq V\times V,\ \card{D}=k }}
  = O_{k}(n^{k})$,
and hence the family of $k$-multicuts has redundancy factor $\Omega_{k}(n^k)$.
Furthermore, this bound is existentially tight for all $n$ and $k$.
\end{theorem}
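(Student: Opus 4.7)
My plan splits into a construction witnessing the lower bound $\Omega_k(n^k)$ and an upper-bound argument showing $\card{\{\mincut(D):|D|=k\}}=O_k(n^k)$. For the existential lower bound I would take $G$ to be a disjoint union of $k$ paths $P_1,\ldots,P_k$, each on approximately $n/k$ vertices, where the edges of $P_i$ carry distinct weights drawn from a geometrically separated interval $R_i=[M^i,2M^i)$ for a suitably large parameter $M$. For demand sets of the form $D=\{(s_i,t_i):i\in[k]\}$ with $(s_i,t_i)$ supported on $P_i$, the disjointness of the paths forces $\mincut(D)=\sum_{i=1}^k \mincut_{P_i}(\{(s_i,t_i)\})$; each summand is an edge weight on $P_i$ and takes $\Theta(n/k)$ distinct values, so the geometric separation of the intervals $R_i$ ensures the sums are all distinct, giving $\Theta((n/k)^k)=\Omega_k(n^k)$ distinct multicut values.

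For the upper bound the plan is to encode each $D$ by an object drawn from a pool of size $O_k(n^k)$ that determines $\mincut(D)$. The natural candidate uses a Gomory-Hu tree $T=(V,E',w')$ of $G$: to each demand pair $(s,t)\in D$ associate a canonical minimum-weight tree edge on the $s$-$t$ path in $T$, and encode $D$ by the resulting multiset of $k$ tree edges, of which there are $\binom{n-1}{k}=O_k(n^k)$. I would then try to show that $\mincut_G(D)$ depends only on this encoding, possibly augmented by a combinatorial ``interaction type'' of the demand pairs taking at most $O_k(1)$ values, which would preserve the overall $O_k(n^k)$ bound.

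The main obstacle is that, unlike in the single-pair setting, a Gomory-Hu tree does not preserve minimum multicut values: already the $4$-cycle with unit weights has $\mincut_G(\{(v_1,v_2),(v_3,v_4)\})=2$ (achieved by the cut $\{v_1,v_4\}\mid\{v_2,v_3\}$), whereas the corresponding multicut in any Gomory-Hu tree of the cycle has value $4$. So the encoding must be refined to capture optimal partitions of $G$ that do not arise from removing tree edges. In parallel I would pursue an inductive approach on $k$: the base case $k=1$ is Theorem~\ref{thm:GH}, and the inductive step would aim to show that appending one demand pair to a $(k{-}1)$-demand set produces at most $O(n)$ new multicut values, so that $O_{k-1}(n^{k-1})\cdot O(n)=O_k(n^k)$. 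Either route hinges on a structural characterization of optimal multicuts that is not an immediate corollary of known Gomory-Hu-type results, and formalising this characterization is the principal technical challenge.
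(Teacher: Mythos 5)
Your lower bound is correct and in the same spirit as the paper's, which uses an even simpler gadget: a perfect matching on $n$ vertices with edge weights $2^1,\dots,2^{n/2}$, so that for any $k$-subset $D^*$ of matching edges the unique (hence minimum) multicut for the demand set $D^*$ is $D^*$ itself, giving $\binom{n/2}{k}=\Omega_k(n^k)$ distinct values. Your disjoint-union-of-paths construction with geometrically separated weight ranges also works, provided you make the separation within each range $R_i$ large enough that changing the edge chosen on $P_i$ cannot be compensated by the lower-indexed paths; this is routine.

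The upper bound, however, is where the theorem's content lies, and there you have a genuine gap that you yourself flag: neither proposed route is carried out, and neither is likely to go through. The Gomory--Hu encoding fails for exactly the reason your $4$-cycle example exposes, and there is no evident $O_k(1)$-valued ``interaction type'' to repair it: the optimal partition for a $k$-demand multicut can be an essentially arbitrary partition of $V$ into up to $k+1$ parts, and its value is not determined by $k$ tree edges plus boundedly much extra data. The inductive route is also unsubstantiated, since $\mincut(D'\cup\{(s,t)\})$ is not a function of $\mincut(D')$ and pairwise cut information, so there is no mechanism forcing ``at most $O(n)$ new values per appended pair.'' The idea you are missing is a linear-algebraic dimension count (a polynomial method), not a structural characterization of optimal multicuts. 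Assign to each demand graph $D$ the degree-$k$ polynomial $P_D=\prod_{(u,v)\in D}(\phi_u-\phi_v)$ in the variables $(\phi_v)_{v\in V}$, so that a partition $\Pi$ (encoded by assigning each vertex the index of its part) is feasible for $D$ if and only if $P_D(\Pi)\neq 0$. List demand graphs $D_1,\dots,D_q$ achieving strictly increasing values $\mincut(D_1)<\dots<\mincut(D_q)$ and let $\Pi_i$ be an optimal partition for $D_i$; then $\Pi_i$ is infeasible for every $D_j$ with $j>i$ (its cut value is too small), so $P_{D_j}(\Pi_i)=0$ for $j>i$ while $P_{D_i}(\Pi_i)\neq 0$. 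Evaluating any vanishing linear combination at $\Pi_1,\Pi_2,\dots$ in turn kills the coefficients one by one, so $P_{D_1},\dots,P_{D_q}$ are linearly independent, and $q$ is at most the dimension $\binom{n+k}{k}=O_k(n^k)$ of the space of $n$-variate polynomials of degree at most $k$. This one-page argument replaces the ``principal technical challenge'' you describe.
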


Theorem~\ref{thm-multicut} is a bit surprising, since it shows
a redundancy factor that is polynomial, rather than linear, in $n$
(for fixed $\alpha,\beta$ and $k$),
so in general \mcut has significantly larger redundancy than \gcut and \mwaycut.

\begin{update}
\label{hassin-update}
Refael Hassin brought to our attention [Private communication, November 2017]
prior work that has overlap with some of our results. 
Hassin~\cite{hassin-88} had previously obtained our upper bounds for \mwaycut (Theorem~\ref{thm-multiway}) and \mcut (Theorem~\ref{thm-multicut}),
using the same ``matrix'' proof technique that we use in Section~\ref{subsection:group-upper-matrix}).
In another paper, Hassin~\cite{hassin1990algorithm} showed how to efficiently compute all the distinct values for various problems such as \mwaycut and \mcut. More precisely, if a problem can have at most $X$ distinct values (in worst-case), then his algorithm computes them by solving only $O(X)$ many instances of that problem.
For the \mwaycut and \mcut problems, Hartvigsen~\cite{hartvigsen} showed how to compute a matrix of size $O(X)$ (again $X$ is the maximum number of distinct values in worst-case) such that the solution for any given instance can be obtained in $X^{O(1)}$ time. Additionally, non-trivial redundacy has been shown for other problems such as {\sc xcut} 
(which asks for a minimum cut such that given vertices
$s,t\in V$ are on the \emph{same} side)~\cite{hassin91}, as well as other problems related to graph coloring, SAT, etc.~\cite{hassin-einstein}. 
In fact, Einstein and Hassin~\cite{hassin-einstein} also obtain the upper bounds for \mwaycut and \mcut, and use both the ``matrix'' (see Section~\ref{subsec:multiway-upper}) and the ``polynomial'' (see Section~\ref{subsec:multicut-upper}) proof techniques.
\end{update}

\subsection{Extensions and Applications}
\label{sec:extensions}

Our main results above 
actually apply more generally and have algorithmic consequences,
as discussed below briefly.

\paragraph{Terminals Version.}
In this version, the vertices to be separated
are limited to a subset $T\subseteq V$ called \emph{terminals},
i.e., we consider only demands inside $T\times T$.
All our results above
(Theorems~\ref{thm:group}, \ref{thm-multiway}, and \ref{thm-multicut})
immediately extend to this version of the problem
--- we simply need to replace $|V|$ by $|T|$ in all the bounds.
As an illustration, the terminals version of Theorem~\ref{thm:GH} states
that 
the $\binom{|T|}{2}$ minimum $st$-cuts (taken over all $s,t\in T$)
attain at most $|T|-1$ distinct values.
(See also~\cite[Section 3.5.2]{CCPS98} for this same version.)
Extending our proofs to the terminals version is straightforward;
for example, in Section~\ref{subsection:group-upper-poly} we need to consider polynomials in $|T|$ variables instead of $|V|$ variables.

\paragraph{Data Structures.}

Flow-equivalent or cut-equivalent trees,
such as those constructed by Gomory and Hu~\cite{GH61}, may be viewed
more generally as succinct data structures that support certain queries,
either for the value of an optimal cut, or for its vertex-partition,
respectively.
Motivated by this view, we define data structures,
which we call as evaluation schemes,
that preprocess an input graph $G$, a set of terminals $T$,
and a collection of demand graphs $\mathcal D$,
so as to answer a cut query given by a demand graph $D\in D$.
The scheme has two flavors, one reports the minimum cut-value,
the second reports a corresponding vertex-partition.
In Section~\ref{sec:dataStructures} we initiate the study of such schemes,
and provide constructions and lower bounds for some special cases.

\paragraph{Functions Different From Cuts.}
Recall that the value of the minimum $st$-cut equals
$\min \aset{ \cut_G(X,V\setminus X) : X\subseteq V,\ s\in X,\ t\notin X }$.
Cheng and Hu~\cite{CH91} extended the Gomory-Hu bound (Theorem~\ref{thm:GH})
to a wider class of problems as follows.
Instead of a graph $G$,
fix a ground set $V$ and a function $f: 2^V \to \mathbb{R}$.
Now for every $s,t\in V$, consider the optimal value
$\min \aset{ f(X) : X\subseteq V,\; |X \cap \{s,t\}| = 1 }$.
They showed that ranging over all $s,t\in V$,
the number of distinct optimal values is also at most $|V|-1$.
All our results above
(Theorems~\ref{thm:group}, \ref{thm-multiway}, and \ref{thm-multicut})
actually extend to every function $f: \pa(V)\rightarrow \mathbb{R}$. However, to keep the notation simple,
we opted to present all our results only for the function $\cut$.

\paragraph{Directed Graphs.} What happens if we ask the same questions for the directed variants of the three problems considered previously?
Here, an $s\rightarrow t$ cut means a set of edges whose removal ensures that no $s\rightarrow t$ path exists.
Under this definition, we can construct explicit examples for the directed variants of our three problems above 
where there is no \emph{non-trivial redundancy},
i.e., the number of distinct cut values is asymptotically equal to the total number of instances. See Appendix~\ref{app:directed-graphs} for more details.

\subsection{Related Work}
\label{sec:related}

Gomory and Hu~\cite{GH61} showed how to compute a cut-equivalent tree,
and in particular a flow-equivalent tree,
using $|V|-1$ minimum $st$-cut computations on graphs no larger than $G$.
Gusfield~\cite{Gusfield90} has shown a version where all the cut computations
are performed on $G$ itself (avoiding contractions).
For unweighted graphs, a faster (randomized) algorithm for computing a Gomory-Hu tree which runs in $\tilde{O}(|E|\cdot |V|)$ time was recently given by Bhalgat et al.~\cite{BHKP07}.

We already mentioned that Cheng and Hu~\cite{CH91} extended Theorem~\ref{thm:GH} from cuts to an arbitrary function $f: 2^V \to \mathbb{R}$.
They further showed how to construct a flow-equivalent tree for this case
(but not a cut-equivalent tree).
\Benczur~\cite{benczur} showed a function $f$ for which there is no
cut-equivalent tree. In addition, he showed that for directed graphs,
even flow-equivalent trees do not exist in general.

Another relevant notion here is that of mimicking networks,
introduced by Hagerup, Katajainen, Nishimura, and Ragde~\cite{KNR98}.
A mimicking network for $G=(V,E,w)$ and a terminals set $T\subseteq V$
is a graph $G'=(V',E',w')$ where $T\subset V'$ and for every $X,Y\in T$,
the minimum $(X,Y)$-cut in $G$ and in $G'$ have the exact same value.
They showed that every graph has a mimicking network with at most $2^{2^{|T|}}$ vertices.
Some improved bounds are known,
e.g., for graphs that are planar or have bounded treewidth,
as well as some lower bounds \cite{CSWZ00,KR13,KR14}.
Mimicking networks deal with the \gcut problem for all $A,B\subset V$;
we consider $A,B$ of bounded size, and thus typically achieve much smaller bounds.

\section{\gcut: The Case of Complete Bipartite Demands} \label{sec:groupCut}

This section is devoted to proving Theorem~\ref{thm:group}. First we give two proofs, one in Section~\ref{subsection:group-upper-poly} via polynomials and the second in Section~\ref{subsection:group-upper-matrix} via matrices, for the bound $|\distinct\{ \mincut(K_{A,B}): |A|=\alpha, |B|=\beta\}|= O_{\alpha,\beta}(n^{\alpha+\beta-1})$. Then in Section~\ref{subsection:group-lower} we construct examples of graphs for which this bound is tight. Since $|\{ K_{A,B}: |A|=\alpha, |B|=\beta\}|= \binom{n}{\alpha}\cdot \binom{n-\alpha}{\beta}= \Theta_{\alpha,\beta}(n^{\alpha+\beta})$, it follows that the redundancy factor is $\Omega_{\alpha,\beta}(n)$.

\subsection{Proof via Polynomials}
\label{subsection:group-upper-poly}

In this section we show the bound $|\distinct\{ \mincut(K_{A,B}): |A|=\alpha, |B|=\beta\}|= O_{\alpha,\beta}(n^{\alpha+\beta-1})$ using polynomials. Let $r = {{n \choose \alpha}{n-\alpha \choose \beta}}$ and let $ \{ K_{A_1, B_1}, K_{A_2,B_2}, \ldots, K_{A_r, B_r} \}$ be the set of demand graphs for \gcutk. For every vertex $v \in V$ we assign a boolean variable denoted by $\phi_v$. Given an instance $A,B$ we can assume that the optimal partition only contains two parts, one which contains $A$ and other which contains $B$, since we can merge other parts into either of these parts.

Fix some $j\in [r]$. Recall that $\Pi=\{U, V \setminus U\} \in \pa(V)$ agrees with, i.e., is a feasible solution for, the demand graph $K_{A_j,B_j}$ if and only if the following holds: $\Pi(u)\neq \Pi(v)$ whenever $u \in A_j$ and $v \in B_j$ or vice versa.

Fix arbitrary $a_j \in A_j$ and $b_j \in B_j$. We associate with the demand graph $K_{A_j, B_j}$ the formal polynomial $P_j$ over the variables $\{\phi_v: v\in V\}$
$$ P_j =\Pi_{b \in B_j}\Big(\phi_{a_j}-\phi_{b}\Big)\cdot \Pi_{a \in A_{j}\setminus \{a_j\}}\Big(\phi_a-\phi_{b_j}\Big) \;.$$
Note that $P_j$ is a polynomial of degree $\alpha+\beta-1$. Given $U \subseteq V$, we may think of $\Pi = \{U,V \setminus U\}$ as a vector in $\{0,1\}^{n}$. We denote by $P_{j}(\Pi)$ the value of the polynomial $P_j$ (over $\mathbb{F}_2$) when instantiated on $\Pi$.

\begin{lemma}
A partition $\Pi$ is feasible for the demand graph $K_{A_j, B_j}$ if and only if $P_{j}(\Pi)\neq 0$
\label{lem:feasibleNonZero}
\end{lemma}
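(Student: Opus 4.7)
The plan is to unpack the polynomial $P_j$ factor-by-factor and match each factor to a separation condition on the partition $\Pi = \{U, V \setminus U\}$, then show that the resulting conditions collectively are equivalent to feasibility for $K_{A_j,B_j}$.

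First I would note that since we evaluate over $\mathbb{F}_2$, identifying $\Pi$ with its characteristic vector in $\{0,1\}^n$, each linear factor $(\phi_x - \phi_y)$ evaluates to $\phi_x(\Pi) + \phi_y(\Pi) \pmod 2$, which is nonzero precisely when $x$ and $y$ lie on opposite sides of $\Pi$. Because $\mathbb{F}_2$ is a field, a product of elements is nonzero iff every factor is nonzero. Applied to $P_j$, this gives: $P_j(\Pi) \ne 0$ iff (i) $a_j$ is separated from every $b \in B_j$, and (ii) $b_j$ is separated from every $a \in A_j \setminus \{a_j\}$.

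The $(\Leftarrow)$ direction is then immediate: if $\Pi$ is feasible for $K_{A_j,B_j}$, then every vertex of $A_j$ is separated from every vertex of $B_j$, which trivially implies both (i) and (ii), so $P_j(\Pi) \ne 0$. For the $(\Rightarrow)$ direction, suppose $P_j(\Pi) \ne 0$. Without loss of generality assume $a_j \in U$. Condition (i) then forces $B_j \subseteq V \setminus U$; in particular $b_j \in V \setminus U$. Now apply condition (ii): every $a \in A_j \setminus \{a_j\}$ is on the opposite side from $b_j$, hence lies in $U$. Together with $a_j \in U$ we conclude $A_j \subseteq U$ and $B_j \subseteq V \setminus U$, which is exactly feasibility of $\Pi$ for $K_{A_j,B_j}$.

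There is no real obstacle here; the only subtle point worth being explicit about is why the two ``anchored'' conditions (i)--(ii), which only involve separating the representatives $a_j$ and $b_j$ from the remaining vertices, already suffice to force the full bipartite separation. The argument above makes clear that the anchor $a_j$ propagates its side to all of $B_j$ via (i), and then the anchor $b_j$ (now pinned to the opposite side) propagates its side to all of $A_j \setminus \{a_j\}$ via (ii); this is precisely why $P_j$ only needs degree $\alpha + \beta - 1$ rather than $\alpha\beta$.
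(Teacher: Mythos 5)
Your proof is correct and follows essentially the same route as the paper's: factor the product over $\mathbb{F}_2$, observe each linear factor is nonzero exactly when its two endpoints are separated, and for the converse propagate sides through the anchors $a_j$ and $b_j$ using that $\Pi$ has only two parts. (Your $(\Leftarrow)$/$(\Rightarrow)$ labels are swapped relative to the lemma as stated, but that is purely cosmetic.)
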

\begin{proof}
Suppose $\Pi$ is feasible for the demand graph $K_{A_j,B_j}$. So $\Pi(u)\neq \Pi(v)$ if $u\in A_j, v\in B_j$ or vice versa. Since every term of $P_{j}$ contains one variable from each of $A_j$ and $B_j$, it follows that $P_{j}(\Pi)\neq 0$.

Conversely, assume $P_{j}(\Pi)\neq 0$.
Let $u \in A_j$. Since $\Pi(u) \ne \Pi(b_j)$ and $\Pi(b_j) \ne \Pi(a_j)$ it follows that $\Pi(u)=\Pi(a_j)$. Similarly for every $v \in B_j$, $\Pi(v)=\Pi(b_j)$. Therefore, it follows that $\Pi(u)\neq \Pi(v)$ whenever $u \in A_j$ and $v \in B_j$ or vice versa, i.e., $\Pi$ is feasible for $K_{A_j,B_j}$.
\end{proof}

Next we show that the polynomials corresponding to demand graphs with distinct values under $\mincut$ are linearly independent.

\begin{lemma}
Reorder the demand graphs such that $\mincut(K_{A_1,B_1})< \ldots < \mincut(K_{A_q,B_q})$. Then the polynomials $P_{1}, \ldots, P_{q}$ are linearly independent.
\label{lem:independence}
\end{lemma}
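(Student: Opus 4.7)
The plan is to establish linear independence by the standard triangular-evaluation trick, using the strict ordering $\mincut(K_{A_1,B_1}) < \ldots < \mincut(K_{A_q,B_q})$ as the engine that forces the triangularity.

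First I would, for each $j \in [q]$, fix an optimal partition $\Pi_j \in \pa(V)$ for the demand graph $K_{A_j,B_j}$, i.e.\ one that agrees with $K_{A_j,B_j}$ and satisfies $\cut(\Pi_j) = \mincut(K_{A_j,B_j})$. (As noted in the text, we may assume $\Pi_j$ has exactly two parts.) Since $\Pi_j$ is feasible for $K_{A_j,B_j}$, Lemma~\ref{lem:feasibleNonZero} gives $P_j(\Pi_j) \ne 0$, so the ``diagonal'' entries of the evaluation matrix $M_{ij} := P_i(\Pi_j)$ are all nonzero.

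Next I would prove the crucial triangularity: for every $i > j$ we have $P_i(\Pi_j) = 0$. Suppose for contradiction that $P_i(\Pi_j) \ne 0$; by Lemma~\ref{lem:feasibleNonZero}, $\Pi_j$ is then feasible for $K_{A_i,B_i}$, so
\[
  \mincut(K_{A_i,B_i}) \;\le\; \cut(\Pi_j) \;=\; \mincut(K_{A_j,B_j}) \;<\; \mincut(K_{A_i,B_i}),
\]
where the final strict inequality uses the reordering assumption $j < i$. This contradiction shows $P_i(\Pi_j) = 0$ whenever $i > j$, so the matrix $M$ is upper triangular with nonzero diagonal entries.

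Finally I would conclude linear independence in the usual way: given a relation $\sum_{i=1}^{q} c_i P_i \equiv 0$, evaluate at $\Pi_1$ to get $c_1 P_1(\Pi_1) = 0$ (all later terms vanish by triangularity), hence $c_1 = 0$; then evaluate at $\Pi_2$ to force $c_2 = 0$; and so on inductively up to $c_q = 0$. I do not foresee a genuine obstacle here — the only subtlety is getting the direction of the triangularity right, which is dictated by the fact that a partition optimal for a \emph{smaller} $\mincut$ demand cannot be feasible for any \emph{larger} $\mincut$ demand, so evaluating $P_i$ at $\Pi_j$ must vanish precisely when $i > j$.
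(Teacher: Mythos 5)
Your proposal is correct and follows essentially the same route as the paper: fix optimal partitions $\Pi_j$, use Lemma~\ref{lem:feasibleNonZero} to get nonzero diagonal evaluations and vanishing off-diagonal evaluations $P_i(\Pi_j)=0$ for $i>j$ (the paper states this triangularity directly from the strict ordering of cut values, whereas you spell out the contradiction $\mincut(K_{A_i,B_i})\le\cut(\Pi_j)<\mincut(K_{A_i,B_i})$), and then eliminate the coefficients one by one by successive evaluation. No gaps.
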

\begin{proof}
Let $\Pi_1, \ldots, \Pi_q$ be the optimal partitions for the instances corresponding to the demand graphs $K_{A_1,B_1}, \ldots, K_{A_q,B_q}$ respectively, i.e., for each $i\in [q]$ we have that $\mincut(K_{A_i,B_i})=\cut(\Pi_{i})$. Since $\mincut(K_{A_i,B_i})< \mincut(K_{A_j,B_j})$ whenever $i<j$, it follows that $\Pi_i$ is not feasible for the demand graph $K_{A_j,B_j}$ for all $i<j$.

Suppose that the polynomials $P_1, P_2, \ldots, P_q$ are not linearly independent. Then there exist constants $\lambda_1, \ldots, \lambda_q \in \mathbb{R}$ which are not all zero such that $P=\sum_{j \in [q]}{\lambda_j P_j}$ is the zero polynomial. We will now show that each of the constants $\lambda_1, \lambda_2, \ldots, \lambda_q$ is zero, leading to a contradiction. Instantiate $P$ on $\Pi_1$. Recall that $\Pi_1$ is not feasible for any $K_{A_i,B_i}$ with $i\geq 2$. Therefore, by Lemma~\ref{lem:feasibleNonZero}, we have that $P_{i}(\Pi_1)=0$ for all $i\geq 2$. Therefore $\lambda_{1}P_{1}(\Pi_1)=0$. Since $\Pi_1$ is an (optimal) feasible partition for instance corresponding to $K_{A_1,B_1}$, applying Lemma~\ref{lem:feasibleNonZero} we get that $P_{1}(\Pi_1)\neq 0$. This implies $\lambda_1=0$. Hence, we have $P=\sum_{2\leq j \leq q}{\lambda_j P_j}$ is the zero polynomial. Now instantiate $P$ on $\Pi_2$ to obtain $\lambda_2=0$ via a similar argument as above. In the last step, we will get that $\lambda_{q-1}=0$ and hence $P=\lambda_{q}P_q$ is the zero polynomial. Instantiating on $\Pi_{q}$ gives $0=P(\Pi_{q})= \lambda_{q}P_{q}(\Pi_q)$. Since $\Pi_{q}$ is (optimal) feasible partition for the demand graph $K_{A_q,B_q}$ it follows that $P_{q}(\Pi_q)\neq 0$, and hence $\lambda_q = 0$.
\end{proof}

Note that each of the polynomials $P_1, P_2, \ldots, P_q$ is contained in the vector space of polynomials with $n$ variables and degree $\le \alpha+\beta-1$. This vector space is spanned by $\{\prod_{v \in V}{\phi_v^{r_v}} : \; \sum_{v \in V}{r_v} \le \alpha+\beta-1\}$ and therefore is of dimension $\binom{n+(\alpha+\beta-1)}{\alpha+\beta-1}=O_{\alpha,\beta}(n^{\alpha+\beta-1})$. From Lemma~\ref{lem:independence} and the fact that size of any set of linearly independent elements is at most the size of a basis, it follows that $\Big|\distinct\{ \mincut(K_{A,B}): |A|=\alpha, |B|=\beta\}\Big|= O_{\alpha,\beta}(n^{\alpha+\beta-1})$.

\subsection{Proof via Matrices}
\label{subsection:group-upper-matrix}

In this section we show the (slightly stronger) bound that $|\distinct\{ \mincut(K_{A,B}): |A|\leq \alpha, |B|\leq \beta\}|= O_{\alpha,\beta}(n^{\alpha+\beta-1})$ using matrices. Let $\pa_{2}(V)\subseteq \pa(V)$ be the set of partitions of $V$ into exactly two parts. Let $\calQ \eqdef \{(A,B) : |A| \le \alpha, \;\; |B| \le \beta\}$.
Consider the matrix $\calM$ over $\mathbb{F}_2$  with $|\calQ|$ rows (one for each element from $\calQ$) and $|\pa_{2}(V)|=2^n$ columns (one for each partition $\Pi$ of $V$ into two parts). We now define the entries of $\cal M$. 
Given $(A,B) \in \calQ$ and $\Pi \in \pa_{2}(V)$, we set $\calM_{(A,B),\Pi}=1$ if and only if the partition $\Pi \in \pa_{2}(V)$ agrees with the demand graph $K_{A,B}$, which is equivalent to saying that $\Pi(u)\neq \Pi(v)$ whenever $u \in A$ and $v \in B$ or vice versa.

Fix a vertex $\one \in V$, and consider the set ${\cal R} \eqdef \{ (A,B) \in \calQ : \one \in A \cup B   \} \;.$

\begin{claim}
\label{lem:row-space-2-way}
Over $\mathbb{F}_2$, the row space of $\calM$ is spanned by the rows corresponding to elements from ${\cal R}$
\end{claim}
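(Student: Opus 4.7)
The plan is to explicitly express every row $f_{A,B}$ of $\calM$ with $\one \notin A \cup B$ as an $\mathbb{F}_2$-linear combination of rows from ${\cal R}$; rows with $\one \in A \cup B$ already lie in ${\cal R}$ by definition, so this will suffice. Identifying each $\Pi \in \pa_2(V)$ with a subset $U \subseteq V$ and introducing formal variables $\chi_v(U) \eqdef [v \in U] \in \mathbb{F}_2$ (and $\chi_X \eqdef \prod_{x \in X}\chi_x$), I would view each row $f_{X,Y}$ of $\calM$ as a polynomial over $\mathbb{F}_2$ in the $\chi_v$'s. A direct case analysis of the indicator of agreement gives, for disjoint non-empty $X,Y$,
\[
  f_{X, Y} \;=\; \chi_X \prod_{b \in Y}(1 + \chi_b) \;+\; \chi_Y \prod_{a \in X}(1 + \chi_a) \pmod 2.
\]
The trivial edge cases are disposed of immediately: if $A \cap B \neq \emptyset$ then no partition is feasible and the row is identically zero, while if $A = \emptyset$ or $B = \emptyset$ then every partition agrees, so the row is the all-ones vector, which coincides with $f_{\{\one\}, \emptyset} \in {\cal R}$.

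For the main case (both $A, B$ disjoint and non-empty with $\one \notin A \cup B$), I would establish the key identity
\[
  (\star)\qquad f_{A, B} \;=\; \sum_{A' \subsetneq A} f_{A' \cup \{\one\},\, B} \;+\; \sum_{B' \subsetneq B} f_{A,\, B' \cup \{\one\}} \pmod 2.
\]
Every summand on the right is indexed by a pair in ${\cal R}$: for instance $|A' \cup \{\one\}| = |A'| + 1 \le |A| \le \alpha$ and $|B| \le \beta$ show that $(A' \cup \{\one\}, B) \in \calQ$, and $\one$ sits in its first coordinate; the other sum is symmetric. Hence $(\star)$ writes $f_{A, B}$ as an $\mathbb{F}_2$-sum of ${\cal R}$-rows, proving the claim.

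To verify $(\star)$ I would expand each summand via the polynomial formula, using the factorizations $\chi_{A' \cup \{\one\}} = \chi_{\one}\chi_{A'}$ and $\prod_{a \in A' \cup \{\one\}}(1 + \chi_a) = (1 + \chi_{\one})\prod_{a \in A'}(1 + \chi_a)$, and symmetrically for the second sum. Two crucial $\mathbb{F}_2$ ingredients then do all the work: (i) $\chi_X \prod_{x \in X}(1 + \chi_x) = 0$ for every non-empty $X$, since $\chi_x^2 = \chi_x$; and (ii) the mod-$2$ M\"obius-type count $\sum_{A' \subsetneq A}\prod_{a \in A'}(1 + \chi_a) \equiv \prod_{a \in A}(1 + \chi_a) + \chi_A \pmod 2$, which holds because for each $S \subsetneq A$ the number of intermediate $A'$ with $S \subseteq A' \subsetneq A$ equals $2^{|A \setminus S|} - 1 \equiv 1 \pmod 2$. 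After substitution, the two sums on the right of $(\star)$ produce matching terms containing $\chi_{\one}$ that cancel in pairs, and the surviving terms collapse to $\chi_A \prod_{b \in B}(1 + \chi_b) + \chi_B \prod_{a \in A}(1 + \chi_a) = f_{A, B}$.

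The main obstacle is discovering the identity $(\star)$ in the first place. The naive one-step relation $f_{A, B} = f_{A \cup \{\one\}, B} + f_{A, B \cup \{\one\}}$, though a valid polynomial equation (easily verified by splitting cases on whether $\one \in U$), pushes the right-hand rows outside $\calQ$ the moment $|A| = \alpha$ or $|B| = \beta$, and iterating one-step moves to eliminate oversized rows tends to introduce new ones of the same type. The strict-subset sums in $(\star)$ are precisely what keep every summand inside ${\cal R}$, and the mod-$2$ cancellations driven by $\chi_x^2 = \chi_x$ together with the parity $2^k - 1 \equiv 1 \pmod 2$ are the algebraic mechanism that makes the verification go through.
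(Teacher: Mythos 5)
Your proof is correct, and the linear relation you use is exactly the one in the paper's proof: the paper shows, for each $(A,B)$ with $\one \notin A\cup B$ and each partition $\Pi$, that $\calM_{(A,B),\Pi}+\sum_{A'\subsetneq A}\calM_{(\{\one\}\cup A',B),\Pi}+\sum_{B'\subsetneq B}\calM_{(A,B'\cup\{\one\}),\Pi}\equiv 0 \pmod 2$, which is precisely your identity $(\star)$. The only real difference is how this identity is verified. The paper argues pointwise: it fixes $\Pi$ and does a case analysis on where $\Pi(\one)$ falls relative to $A$ and $B$; the trickiest case requires exhibiting an inclusion-wise maximal $A^*\subsetneq A$ with $\calM_{(\{\one\}\cup A^*,B),\Pi}=1$ and counting $2^{|A^*|}$ nonzero terms. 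You instead verify $(\star)$ as a formal multilinear identity over $\mathbb{F}_2$, starting from the representation $f_{X,Y}=\chi_X\prod_{b\in Y}(1+\chi_b)+\chi_Y\prod_{a\in X}(1+\chi_a)$ of each row; your two ingredients (the idempotence $\chi_x^2=\chi_x$ and the mod-$2$ subset count) are exactly what is needed, the $\chi_{\one}$-dependent terms do cancel, and the remainder collapses to $f_{A,B}$. This algebraic route is more mechanical and arguably less error-prone than the paper's case analysis, and you additionally dispose of the degenerate rows with $A=\emptyset$ or $B=\emptyset$ (which formally belong to $\calQ$ but are glossed over in the paper). In short: same spanning set and same decomposition, with a cleaner verification of the key identity.
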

\begin{proof}
Consider $(A,B)\in \calQ$ and $\Pi\in \pa_{2}(V)$. If $\one \in A \cup B$ then $(A,B) \in {\cal R}$. Henceforth we assume that $\one\notin A\cup B$. Let
$$L(\Pi) \eqdef \calM_{(A,B),\Pi} + \sum_{A'\subset A}{\calM_{(\one \cup A',B),\Pi}} + \sum_{B'\subset B}{\calM_{(A,B'\cup \one),\Pi}} \; ,$$
where addition is over $\mathbb{F}_2$. Note that $(\one \cup A',B), (A,B'\cup \one) \in {\cal R}$ for every $A'\subset A$ and $B'\subset B$, and therefore it is enough to show that $L(\Pi) \equiv 0 \pmod{2}$.

Assume first that $\calM_{(A,B),\Pi} =1$, i.e. $\Pi$ agrees with the demand graph $K_{A,B}$.
Without loss of generality assume that $\Pi(\one)=\Pi(a)$ for some $a \in A$. Then we have $\calM_{(\one\cup A',B),\Pi}=1$ for all $A'\subset A$, and $\calM_{(A,\one\cup B'),\Pi}=0$ for all $B'\subset B$. So, $L(\Pi)=1+(2^{|A|}-1) \equiv  0 \pmod{2}$.

Otherwise, we have $\calM_{(A,B),\Pi} =0$. If for every $v \in A \cup B$ it holds that $\Pi(v) \ne \Pi(\one)$ then
$$L(\Pi) =  0 + \calM_{(\one,B),\Pi} + \calM_{(A,\one),\Pi} = 1+1 \equiv 0 \pmod{2}\; .$$
Hence suppose that there exists $v \in A \cup B$ such that $\Pi(v)=\Pi(\one)$. Without loss of generality, assume $v \in A$. Then $\calM_{(A,B'+\one),\Pi}= 0$ for all $B' \subset B$. Note that if $A_1,A_2 \subset A$ satisfy $\calM_{(\one \cup A_1,B),\Pi}=1=\calM_{(\one \cup A_2,B),\Pi}$, then $\calM_{(\one \cup A_1\cup A_2,B),\Pi} = 1$. Hence there is an inclusion-wise maximal set $A^* \subset A$ such that $\calM_{(\one \cup A^*,B),\Pi}=1$. Since $\calM_{(A,B),\Pi} =0$, we conclude that $A^* \subset A$. Moreover $|A^*|\geq 1$ since $v\in A$. Therefore
$$L(\Pi) = \calM_{(A,B),\Pi} + \sum_{A'\subset A} \calM_{(\one\cup A',B),\Pi} = \sum_{A'\subseteq A^*} \calM_{(\one\cup A',B),\Pi} = 2^{|A^*|} \equiv 0\ \text{mod}(2) \;$$
\end{proof}

An argument similar to Lemma~\ref{lem:independence} shows that rows corresponding to demand graphs with distinct values under $\mincut$ are linearly independent. Hence, we have $\Big|\distinct\{ \mincut(K_{A,B}): |A|\leq \alpha, |B|\leq \beta\}\Big|\leq \texttt{rank}(\calM)\leq |{\cal R}|$, where the last inequality follows from Claim~\ref{lem:row-space-2-way}. We now obtain the final bound
\begin{align*}
  |{\cal R}|
  &= \sum_{i\leq \alpha-1, j\leq \beta} \tbinom{n-1}{i}\cdot \tbinom{n-i-1}{j} +
  \sum_{j\leq \beta-1, i\leq \alpha} \tbinom{n-1}{j}\cdot \tbinom{n-j-1}{i} \\
 &= \sum_{i\leq \alpha-1, j\leq \beta} O_{i,j}(n^{i+j}) + \sum_{j\leq \beta-1, i\leq \alpha} O_{i,j}(n^{i+j})\\
 &= O_{\alpha,\beta} (n^{\alpha+\beta-1})
\end{align*}

\subsection{Lower Bound on Number of Distinct Cuts for \gcutk}
\label{subsection:group-lower}

We now turn to prove that the bound given in Theorem~\ref{thm:group} is existentially tight. To this end, we construct an infinite family $G_n^{\alpha,\beta}$ of graphs satisfying $|\{ \mincut(K_{A,B}): |A|=\alpha, |B|=\beta\}|\geq  \Omega_{\alpha,\beta}(n^{\alpha+\beta-1})$.

Let $n, \alpha,\beta \in \mathbb{N}$ be such that $n$ is odd, and both $\alpha$ and $\beta-1$ divide $(n-3)/2$.
We define a graph $G^{\alpha,\beta}_n$ on $n$ vertices as follows. $G^{\alpha,\beta}_n$ is composed of two graphs that share a common vertex $H^{\alpha}_n$ and $J^{\beta}_n$ defined below.
\begin{itemize}
\compactify
\item $H^{\alpha}_n$ has $(n+1)/2$ vertices, and is given by $\alpha$ parallel paths $P_1, \ldots, P_\alpha$ between two designated vertices $s,t$, each path having $(n-3)/2\alpha$ internal vertices. The edge weights are given by distinct powers of $2$, monotonically decreasing from $s$ to $t$.  All edges in $H^{\alpha}_n$ incident on $t$ have $\infty$ weight (see Figure~\ref{fig:gcut}).
\item $J^{\beta}_n$ has $(n+1)/2$ vertices, and is given by $(\beta-1)$ parallel paths $Q_1, \ldots, Q_{\beta-1}$, between $t$ and a designated vertex $u$, each having $(n-3)/2(\beta-1)$ internal vertices. As in $H^\alpha_n$, edge weights are given by distinct powers of $2$, monotonically decreasing from $t$ to $u$, and all of which are strictly smaller than the weights of $H^\alpha_n$. All edges in $J^{\beta}_n$ incident on $u$ have $\infty$ weight.

\end{itemize}

 \begin{center}
\begin{figure}[ht]

 \vspace{-5mm}
 \includegraphics[width=6.5in]{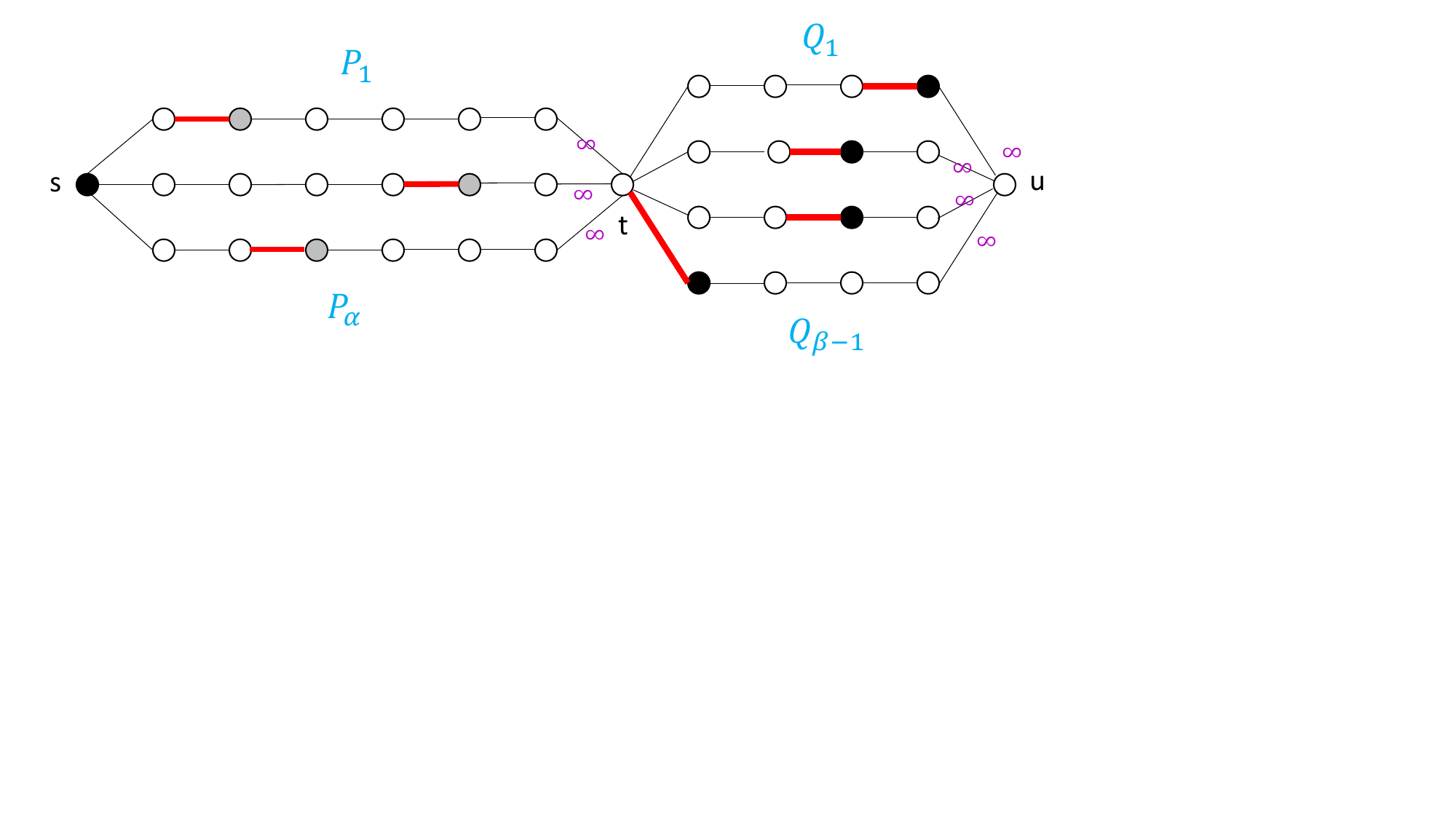}
 \vspace{-60mm}
 \caption{The graph $G^{\alpha,\beta}_n$ used in the lower bound of Section~\ref{subsection:group-lower}.
The left part of the graph is $H^{\alpha}_n$, consisting of $\alpha$ parallel $s \dash t$ paths.
The right part of the graph is $J^{\beta}_n$, consisting of $(\beta-1)$ parallel $t\dash u$ paths. The gray vertices are in $A$, and the black ones are in $B$. The red edges represent the minimum cut for this choice of $A$ and $B$.
 \label{fig:gcut}}
 \end{figure}
 \end{center}

The following claim implies the desired lower bound.
\begin{claim}
$|\{ \mincut(K_{A,B}): |A|=\alpha, |B|=\beta\}|\geq  \Omega_{\alpha,\beta}(n^{\alpha+\beta-1})$.
\end{claim}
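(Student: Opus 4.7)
The plan is to exhibit an explicit family of $\Omega_{\alpha,\beta}(n^{\alpha+\beta-1})$ pairs $(A,B)$ whose $\mincut$ values are pairwise distinct. Let $m \eqdef (n-3)/(2\alpha)$ and $m' \eqdef (n-3)/(2(\beta-1))$ denote the number of internal vertices on each $P_i$ and each $Q_j$, respectively. For each pair of tuples $\vec{k} = (k_1, \ldots, k_\alpha) \in [m]^\alpha$ and $\vec{l} = (l_1, \ldots, l_{\beta-1}) \in [m']^{\beta-1}$, I take $A(\vec{k}) = \{a_1, \ldots, a_\alpha\}$, where $a_i$ is the $k_i$-th internal vertex of $P_i$ (counting from $s$), and $B(\vec{l}) = \{s, b_1, \ldots, b_{\beta-1}\}$, where $b_j$ is the $l_j$-th internal vertex of $Q_j$ (counting from $t$). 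These choices produce $m^\alpha(m')^{\beta-1} = \Omega_{\alpha,\beta}(n^{\alpha+\beta-1})$ pairs, so the claim reduces to proving that distinct tuples yield distinct $\mincut$ values.

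The target identity is
\[
  \mincut(K_{A(\vec{k}),\,B(\vec{l})})
  \;=\; \sum_{i=1}^{\alpha} w_{P_i}(k_i) \;+\; \sum_{j=1}^{\beta-1} w_{Q_j}(l_j),
\]
where $w_{P_i}(k)$ denotes the weight of the edge of $P_i$ incident on its $k$-th internal vertex from the $s$-side, and $w_{Q_j}(l)$ the analogous edge on $Q_j$ from the $t$-side. For the upper bound, I cut exactly these $\alpha+\beta-1$ edges: the resulting partition places $s$ together with the first $k_i-1$ internal vertices of each $P_i$, and $u$ together with the internal vertices of each $Q_j$ at positions $l_j$ onwards, on the $B$-side, while placing every $a_i$, the vertex $t$, and the initial portion of each $Q_j$ on the $A$-side. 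One checks directly that this partition separates $A$ from $B$, and no $\infty$-edges are cut.

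For the lower bound I argue path-by-path. On each $P_i$, since $s \in B$ and $a_i \in A$ lie on the same path, the cut must contain at least one of the $k_i$ edges between them; by the monotone decrease of weights from $s$ to $t$, the minimum-weight such edge is precisely $w_{P_i}(k_i)$. For the $J$-side I case-split on whether $t \in V_A$ or $t \in V_B$. In the first case, each $Q_j$ must be cut between $t$ and $b_j$, contributing at least $w_{Q_j}(l_j)$ by the same monotonicity, matching the target formula exactly. In the second case, each $P_i$ requires an additional cut between $a_i \in V_A$ and $t \in V_B$ using another finite $H$-edge; since every $H$-weight is a distinct power of $2$ strictly larger than the sum of all $J$-weights, a single such extra $H$-cut already dominates the entire $J$-contribution from the first case, so this alternative is strictly worse.

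Distinctness of $\mincut$ values across tuples then follows from the powers-of-$2$ weighting: different $(\vec{k}, \vec{l})$ select different $(\alpha+\beta-1)$-element subsets of edges in $G_n^{\alpha,\beta}$, and the corresponding sums of distinct powers of $2$ are themselves distinct by uniqueness of the binary representation. This yields the claimed $\Omega_{\alpha,\beta}(n^{\alpha+\beta-1})$ distinct mincut values. The main obstacle I anticipate is airtightening the lower-bound case analysis against more exotic partitions (for instance, isolating each $a_i$ via two cuts on $P_i$, placing non-contiguous pieces of $Q_j$ on each side, or routing $u$ into $V_A$), but the powers-of-$2$ weight gap between $H$ and $J$ collapses each such variant onto the two cases treated above and verifies that the baseline formula is optimal.
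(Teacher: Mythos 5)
Your proof is correct and follows essentially the same route as the paper: the same family of pairs (one internal vertex per $P_i$ for $A$; one per $Q_j$ plus $s$ for $B$), the same identification of the optimal cut as the cheapest edge on each relevant segment, and distinctness via the powers-of-two weighting. In fact your case analysis on the location of $t$ (using that any single finite $H$-weight exceeds the sum of all $J$-weights) makes explicit a step the paper dismisses with ``it can be easily seen.''
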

\begin{proof}
Pick one internal vertex from each $P_i$ for $i\in [\alpha]$ to form $A$. Similarly for $\beta-1$ elements in $B$, we pick one internal vertex from each $Q_j$ for $j\in [\beta]$. In addition, $s \in B$ (as demonstrated in Figure~\ref{fig:gcut}). We claim that every such choice of $A, B$ gives a distinct value for the minimum $(A,B)$-cut.

Indeed, for $i \in [\alpha]$ let $a_i$ be the unique element in $A \cap P_i$. In order to separate $A$ from $B$, we need to separate $a_i$ from $s$. This implies that at least one edge on the segment of $P_i$ between $s$ and $a_i$ has to be in the cut. By monotonicity of weights and minimality of the cut, this must be the edge incident to $a_i$. Similarly, for every $b \in B \setminus \{s\}$, the left edge incident to $b$ must be cut. It can be easily seen (as demonstrated in Figure~\ref{fig:gcut}) that this set of edges is also enough to separate $A$ and $B$.

By the choice of weights, each such cut has a unique value, and therefore $|\{ \mincut(K_{A,B}): |A|=\alpha, |B|=\beta\}|\geq  ((n-3)/2\alpha)^{\alpha}((n-3)/2(\beta-1))^{\beta-1} = \Omega_{\alpha,\beta}(n^{\alpha+\beta-1})$.
\end{proof}

\section{\mwaycut: The Case of Clique Demands} \label{sec:multiway}

This section is devoted to proving Theorem~\ref{thm-multiway}.
In Section~\ref{subsec:multiway-upper} we show that for every graph $G=(V,E,w)$ we have $|\distinct \{ \mincut(K_S): |S|=k\}|= O_{k}(n^{k-1})$. The proof follows the lines of the proof from Section~\ref{subsection:group-upper-matrix}.
In Section~\ref{subsec:multiway-lower} we construct an infinite family of graphs for which this bound is tight. Since $|\{ K_{S}: |S|=k\}|= \binom{n}{k}= \Theta_{k}(n^{k})$, it follows that the redundancy factor is $\Omega_{k}(n)$.

\subsection{Upper Bound on Number of Distinct Cuts for \mwaycutk}
\label{subsec:multiway-upper}

In this section we show that $|\distinct \{ \mincut(K_S): |S|=k\}|= O_{k}(n^{k-1})$. Let $\pa_{k}(V)\subseteq \pa(V)$ be the set of partitions of $V$ into exactly $k$ parts. Let $\calQ \eqdef \{A\subseteq V : |A| =k\}$. Consider the matrix $\calM$ over $\mathbb{F}_2$  with $|\calQ|$ rows (one for each element from $\calQ$) and $|\pa_{k}(V)|$ columns (one for each partition $\Pi$ of $V$ into $k$ parts). We now define the entries of $\cal M$.
Given $A \in \calQ$ and $\Pi \in \pa_{k}(V)$, we set $\calM_{A,\Pi}=1$ if and only if the partition $\Pi \in \pa_{k}(V)$ agrees with the demand graph $K_A$. That is if and only if we have $\Pi(u)\neq \Pi(v)$ for every $u,v \in A$ such that $u\neq v$
Fix a vertex $\one \in V$, and consider the set ${\cal R} \eqdef \{ A \in \calQ : \one \in A  \} \;.$

\begin{claim}
\label{lem:row-space-k-way}
Over $\mathbb{F}_2$, the row space of $\calM$ is spanned by the rows corresponding to elements from ${\cal R}$
\end{claim}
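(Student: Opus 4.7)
The plan is to follow the template of Claim~\ref{lem:row-space-2-way}: for every $A\in\calQ$ with $\one\notin A$, I will exhibit an explicit $\mathbb{F}_2$-combination of rows indexed by $\mathcal{R}$ that equals $\calM_{A,\cdot}$. Writing $A=\{a_1,\dots,a_k\}$, the natural candidate is the ``swap'' family $A^{(i)}\eqdef(A\setminus\{a_i\})\cup\{\one\}$ for $i\in[k]$, each of which has size $k$ and contains $\one$ and hence lies in $\mathcal{R}$. Note also that the $A^{(i)}$ are pairwise distinct. It then suffices to verify the pointwise identity
\[
  \calM_{A,\Pi}\;\equiv\;\sum_{i=1}^{k}\calM_{A^{(i)},\Pi}\pmod{2}
  \qquad\text{for every }\Pi\in\pa_k(V),
\]
since that rewrites the row $\calM_{A,\cdot}$ as an $\mathbb{F}_2$-sum of rows of $\mathcal{R}$.

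To check the identity I would do a brief case analysis driven by the multiset of parts that $A$ occupies in $\Pi$. For each part $r$ of $\Pi$ let $c_r\eqdef|\{\ell:\Pi(a_\ell)=r\}|$, so $\sum_r c_r=k$ and $\Pi$ agrees with $K_A$ precisely when every $c_r=1$. A direct check shows that $\calM_{A^{(i)},\Pi}=1$ iff the modified counts $c_r-\mathbf{1}[r=\Pi(a_i)]+\mathbf{1}[r=\Pi(\one)]$ are all equal to $1$. The three cases are: (i)~all $c_r=1$, so LHS $=1$, in which case RHS equals $1$ via the unique index $j$ with $\Pi(a_j)=\Pi(\one)$; (ii)~exactly one $c_r=2$, one $c_r=0$, and the rest $c_r=1$ (the only failure pattern with $k-1$ occupied parts), in which case $A^{(i)}$ is separated iff $a_i$ sits in the doubled part \emph{and} $\Pi(\one)$ lies in the empty part, producing either $0$ or exactly $2$ separated swaps so that RHS $\equiv 0 \equiv$ LHS; (iii)~$A$ occupies at most $k-2$ parts, in which case any single swap still leaves at most $k-1$ occupied parts, so no $A^{(i)}$ is separated and both sides vanish.

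The real content of the argument lies in case~(ii): identifying the swap combination is essentially forced by the structure of the proof of Claim~\ref{lem:row-space-2-way}, and cases (i) and (iii) are immediate from the counting, but case~(ii) relies on the parity observation that the two duplicate vertices yield two separated swaps which cancel modulo $2$. The main obstacle I anticipate is simply making sure the bookkeeping covers every failure pattern---this is why the parameter $c_r$ is convenient, as it reduces the verification to a finite number of integer-count configurations with $\sum_r c_r=k$ and $c_r\le 2$ after the swap. Once the identity is established for each $A\in\calQ\setminus\mathcal{R}$, the claim follows.
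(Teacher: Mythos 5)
Your proposal is correct and follows essentially the same route as the paper: you express the row of $A$ (with $\one\notin A$) as the $\mathbb{F}_2$-sum of the $k$ ``swap'' rows $(A\setminus\{a_i\})\cup\{\one\}$ and verify the pointwise cancellation, which is exactly the combination $L(\Pi)$ used in the paper's proof. Your bookkeeping via the occupancy counts $c_r$ is just a slightly more systematic packaging of the paper's case analysis (which splits on $\calM_{A,\Pi}=1$ versus $0$ and identifies the two colliding elements $a',a''$ that produce the cancelling pair of separated swaps).
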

\begin{proof}
Consider $A\in \calQ$ and $\Pi\in \pa_{k}(V)$. If $\one \in A $ then $A \in {\cal R}$. Henceforth we assume that $\one\notin A$. Let
$$L(\Pi) \eqdef \calM_{A,\Pi} + \sum_{a \in A} \calM_{\{\one\} \cup A \setminus \{a\},\Pi} $$
where addition is over $\mathbb{F}_2$. Note that ($\{\one\} \cup A \setminus \{a\}) \in \cal{R}$ for every $a \in A$, and hence it is enough to show that $L(\Pi) \equiv 0 \pmod{2}$.

Assume first that $\calM_{A,\Pi}=1$. Since $\Pi \in \pa_k(V)$, there is a unique element $a^* \in A$ such that $\Pi(\one)=\Pi(a^*)$. Then $\calM_{\{\one\} \cup A \setminus \{a\},\Pi}=0$ for all $a \in A \setminus \{a^*\}$ and $\calM_{\{\one\} \cup A \setminus \{a^*\},\Pi}=1$. Therefore
$$L(\Pi) \eqdef \calM_{A,\Pi} + \sum_{a \in A} \calM_{\{\one\} \cup A \setminus \{a\},\Pi} = 1 + 1 \equiv 0 \pmod{2} \;.$$
Next, assume $\calM_{A,\Pi}=0$. If $\calM_{\{\one\} \cup A \setminus \{a\},\Pi}=0$ for all $a \in A$, then clearly $L(\Pi)\equiv 0 \pmod{2}$. Otherwise, there exists $a' \in A$ such that $\calM_{\{\one\} \cup A \setminus \{a'\},\Pi}=1$. Since $\calM_{A,\Pi}=0$, it follows that $\Pi(a') \ne \Pi(\one)$. Therefore there is some $a'' \in A$ such that $\Pi(a')=\Pi(a'')$. Since $\calM_{\{\one\} \cup A \setminus \{a'\},\Pi}=1$, it follows that $\calM_{\{\one\} \cup A \setminus \{a''\},\Pi}=1$ and $\calM_{\{\one\} \cup A \setminus \{a\},\Pi}=0$ for any $a\in A\setminus \{a',a''\}$.
Therefore
$$L(\Pi) \eqdef \calM_{A,\Pi} + \sum_{a \in A} M_{\{\one\} \cup A \setminus \{a\},\Pi} = 0 + 1 + 1 \equiv 0 \pmod{2} \;.$$
\end{proof}

An argument similar to Lemma~\ref{lem:independence} shows that rows corresponding to demand graphs with distinct values under $\mincut$ are linearly independent. Hence, we have $\Big|\distinct \{ \mincut(K_S): |S|=k\}\Big|\leq \texttt{rank}(\calM)\leq |{\cal R}|$, where the last inequality follows from Claim~\ref{lem:row-space-k-way}. We now obtain the final bound since $|{\cal R}| = \binom{n-1}{k-1} = O_{k}(n^{k-1})$

\subsection{Lower Bound on Number of Distinct Cuts for \mwaycutk}
\label{subsec:multiway-lower}

We now turn to prove that the bound given in Theorem~\ref{thm-multiway} is existentially tight. To this end, we construct an infinite family $P_n$ of graphs satisfying $|\{ \mincut(K_{S}): |S|=k\}|\geq  \Omega_{k}(n^{k-1})$.

For $n\in \mathbb{N}$ consider the path graph $P_n=(V_n,E_n,w)$ where $V_n=\{1, 2, \ldots, n\}$ and $E_n=\{\{i,i+1\} : 1\leq i\leq n-1\}$. For each $i\in [n-1]$ we denote the edge $\{i,i+1\}$ by $e_i$ and set $w(e_i)=2^i$.
By choice of the weights, it follows that any set of $k-1$ edges from $E_n$ has different weight. Now consider a set $E^*\subseteq E_n$ of exactly $k-1$ edges. We will show that there is an set $S^*\subseteq V_n$ of size $k$ such that $E^*$ is the minimum solution for the \mwaycutk instance with $S^*$ as the input. Let $E^* = \{i_1, i_2, \ldots, i_{k-1}\}$. Then it is easy to see (by choice of weights) that $E^*$ is the minimum weight solution for the instance with input $S^*=\{i_1, i_2 \ldots, i_{k-1}, {i_{k-1}}+1\}$. Note that $e_{i_{k-1}}$ is an edge implies $i_{k-1} \leq n-1$ and so $i_{k-1} + 1$ is well-defined.

This implies that for the path graph $P_n$ (with the weight function specified above) we have $\Big|\distinct\{ \mincut(K_S): |S|=k\}\Big|\geq \binom{n-1}{k-1}$, and hence we have $\Big|\distinct\{ \mincut(K_S): |S|=k\}\Big|=\Omega_{k}(n^{k-1})$.

\section{\mcut: The Case of Demands with Fixed Number of Edges} \label{sec:multicut}

This section is devoted to proving Theorem~\ref{thm-multicut}.
In Section~\ref{subsec:multicut-upper} we show that for every graph $G$, $|\distinct\{ \mincut(D): D\subseteq V\times V, |D|=k\}|= O_{k}(n^{k})$. This proof follows the lines of the proof from Section~\ref{subsection:group-upper-poly}.
Then in Section~\ref{subsec:multicut-lower} we construct an infinite family of graphs for which this bound is tight. Since $|\{ D: D\subseteq V\times V, |D|=k\}|= \binom{{\binom{n}{2}}}{k}= \Theta_{k}(n^{2k})$, it follows that the redundancy factor is $\Omega_{k}(n^k)$.

\subsection{Upper Bound on Number of Distinct Cuts for \mcutk}
\label{subsec:multicut-upper}

In this section we show that $|\distinct\{ \mincut(D): D\subseteq V\times V, |D|=k\}|= O_{k}(n^{k})$. Let $r = \binom{\binom{n}{2}}{k}$ and the set of demand graphs for \mcutk be $\{ D_1, \ldots, D_r \}$. For every vertex $v \in V$ we assign a variable denoted by $\phi_v$ which can take values from $[n]$. Fix some $j\in [r]$.
Recall that $\Pi\in \pa(V)$ agrees with (or equivalently, is feasible for) the demand graph $D_j$ if and only if $u-v\in D_j$ implies $\Pi(u)\neq \Pi(v)$.

We associate with the demand graph $D_j$ the formal polynomial
$$P_j =\Pi_{u-v \in D_j}\Big(\phi_u-\phi_v\Big)$$
Note that $P_j$ is a polynomial of degree $k$.
We denote by $P_{j}(\Pi)$ the value of the polynomial $P_j$ (over $\mathbb{F}_2$) when instantiated on $\Pi$. The proof of the next lemma is straightforward.

\begin{lemma}
A partition $\Pi$ is feasible for the instance corresponding to the demand graph $D_j$ if and only if $P_{j}(\Pi)\neq 0$
\label{lem:feasibleNonZero-multicut}
\end{lemma}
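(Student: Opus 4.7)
The proof is essentially a one-step unfolding of definitions. Identify a partition $\Pi \in \pa(V)$ with the vector $(\Pi(v))_{v \in V} \in [n]^V$, where $\Pi(v)$ is the index of the unique part containing $v$. Instantiating $\phi_v$ at $\Pi(v)$ turns the formal polynomial $P_j$ into
\[
P_j(\Pi) \;=\; \prod_{u-v \in D_j} \bigl(\Pi(u) - \Pi(v)\bigr),
\]
so the claim reduces to: this product is nonzero if and only if every one of its factors is nonzero.

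For the forward direction, suppose $\Pi$ is feasible for $D_j$, i.e., $\Pi(u) \neq \Pi(v)$ for every $u-v \in D_j$. Then each factor is a difference of two distinct elements of $[n]$, so in any field in which the $n$ possible part-indices are pairwise distinct (e.g., any field of characteristic zero, or $\mathbb{F}_p$ for a prime $p > n$) every factor is nonzero, and hence so is the product. Conversely, if $\Pi$ does not agree with $D_j$, then some $u-v \in D_j$ witnesses $\Pi(u) = \Pi(v)$; that factor vanishes and drags the whole product to $0$.

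This is the direct multi-part analogue of Lemma~\ref{lem:feasibleNonZero} and, if anything, slightly cleaner, because $P_j$ here is literally the product over all demand edges rather than a carefully pruned subproduct (as was needed in the complete-bipartite case). There is no real obstacle; the only point that deserves a line of justification is the ambient field, which must be large enough to separate the $n$ candidate values of $\phi_v$ so that $\Pi(u) - \Pi(v) = 0$ captures $\Pi(u) = \Pi(v)$ faithfully. This is exactly the setting needed by the subsequent linear-independence argument (the analogue of Lemma~\ref{lem:independence}) that pushes through to the $O_k(n^k)$ bound.
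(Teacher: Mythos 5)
Your proof is correct and is essentially identical to the paper's: both directions are a one-step unfolding of the definition of $P_j$ as the product of $\phi_u-\phi_v$ over the demand edges, with nonvanishing of the product equivalent to nonvanishing of every factor. Your added remark about the ambient field is well taken and is in fact a small correction to the paper, whose parenthetical claim that $P_j$ is evaluated ``over $\mathbb{F}_2$'' is inconsistent with the variables taking values in $[n]$ (e.g., $\Pi(u)=1$, $\Pi(v)=3$ would make a factor vanish mod $2$ even though $u$ and $v$ lie in different parts); one must indeed evaluate over $\mathbb{Q}$ or a field of size larger than $n$, exactly as you note, and this is also what the subsequent linear-independence and dimension count require.
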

\begin{proof}
Suppose $\Pi$ is feasible for the instance corresponding to the demand graph $D_j$. So for every edge $u-v\in D_j$ we have $\Pi(u)\neq \Pi(v)$ and hence $P_{j}(\Pi)\neq 0$.

Conversely, assume $P_{j}(\Pi)\neq 0$. Hence, for each edge $u-v\in D_j$ we have $\Pi(u)\neq \Pi(v)$ which is exactly the condition for $\Pi$ being feasible for the demand graph $D_j$.
\end{proof}

The proof of the following lemma is very similar to that of Lemma~\ref{lem:independence}, and hence we omit the details.

\begin{lemma}
Reorder the demand graphs such that $\mincut(D_1)<\mincut(D_2)<\ldots<\mincut(D_q)$. Then the polynomials $P_{1}, P_{2}, \ldots, P_{q}$ are linearly independent.
\label{lem:independenceMulticut}
\end{lemma}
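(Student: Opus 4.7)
The plan is to repeat, almost verbatim, the argument used for Lemma~\ref{lem:independence} in the \gcut section, with Lemma~\ref{lem:feasibleNonZero-multicut} playing the role of Lemma~\ref{lem:feasibleNonZero}. For each $i\in[q]$ I would fix an optimal partition $\Pi_i$ for the instance defined by $D_i$, so that $\cut(\Pi_i)=\mincut(D_i)$. The key observation is that whenever $i<j$ the partition $\Pi_i$ cannot be feasible for $D_j$: if it were, then $\Pi_i$ would witness $\mincut(D_j)\le \cut(\Pi_i)=\mincut(D_i)<\mincut(D_j)$, a contradiction. Applying Lemma~\ref{lem:feasibleNonZero-multicut} thus gives $P_j(\Pi_i)=0$ whenever $i<j$, while $P_i(\Pi_i)\neq 0$ for every $i$ since $\Pi_i$ is feasible for $D_i$.

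Next, suppose for contradiction that some non-trivial linear combination $P=\sum_{j=1}^q \lambda_j P_j$ vanishes identically. Instantiating $P$ on $\Pi_1$, all terms with $j\ge 2$ drop out by the observation above, leaving $\lambda_1 P_1(\Pi_1)=0$; since $P_1(\Pi_1)\ne 0$, we conclude $\lambda_1=0$. The relation then reduces to $\sum_{j=2}^q \lambda_j P_j \equiv 0$, and instantiating on $\Pi_2$ in the same way forces $\lambda_2=0$. Iterating through $\Pi_3,\ldots,\Pi_q$ yields $\lambda_j=0$ for every $j\in[q]$, contradicting non-triviality and establishing the desired independence.

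I do not anticipate any real obstacle: the argument exploits only the lower-triangular structure of the evaluation matrix $(P_j(\Pi_i))_{i,j}$, which follows entirely from the strict ordering $\mincut(D_1)<\cdots<\mincut(D_q)$ together with the feasibility characterization of Lemma~\ref{lem:feasibleNonZero-multicut}. The one point worth a sanity check is that $P_j$ is evaluated in an algebraic structure where the equivalence ``feasible $\Leftrightarrow$ $P_j(\Pi)\ne 0$'' genuinely holds (so that distinct part-labels produce non-zero differences); this is already subsumed by the preceding lemma and so needs no further work here.
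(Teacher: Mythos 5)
Your proposal is correct and is exactly the argument the paper intends: the paper omits the proof, stating only that it is "very similar to that of Lemma~\ref{lem:independence}," and your adaptation (optimal partitions $\Pi_i$, infeasibility of $\Pi_i$ for $D_j$ when $i<j$, Lemma~\ref{lem:feasibleNonZero-multicut} giving the triangular vanishing pattern, and sequential elimination of the $\lambda_j$) is precisely that argument transplanted to the \mcut setting.
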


Note that each of the polynomials $P_1, P_2, \ldots, P_q$ is contained in the vector space of polynomials with $n$ variables and degree $\leq k$. It is well known that the size of a basis of this vector space is $\binom{n+k}{k}= O_{k}(n^k)$. From Lemma~\ref{lem:independenceMulticut} and the fact that size of any set of linearly independent elements is at most the size of a basis, it follows that $\Big|\distinct\{ \mincut(D): D\subseteq V\times V, |D|=k\}\Big|= O_{k}(n^{k})$.

\subsection{Lower Bound on Number of Distinct Cuts for \mcutk}
\label{subsec:multicut-lower}

We now turn to prove that the bound given in Theorem~\ref{thm-multicut} is existentially tight. To this end, we construct an infinite family $PM_n$ of graphs satisfying $|\{ \mincut(D): D\subseteq V\times V, |D|=k\}|\geq  \Omega_{k}(n^{k})$.

For \emph{even} $n\in \mathbb{N}$  consider the graph $PM_n=(V_n,D_n,w)$ which is a perfect matching on $n$ vertices. Let $V_n=\{1, 2, \ldots, n\}$ and $D_n=\{\{2i-1,2i\} : 1\leq i\leq n/2\}$. For each $i\in [n]$ we denote the edge $\{2i-1,2i\}$ by $d_i$ and set $w(d_i)=2^i$.
By choice of the weights, it follows that any set of $k$ edges from $D_n$ has different total weight. Now consider a set $D^*\subseteq D_n$ of exactly $k$ edges. We will now show that there is an set $D^{**}\subseteq D_n$ of size $k$ such that $D^*$ is the minimum solution for the \mcutk instance whose demand graph is $D^{**}$. It is easy to see that $D^{*}$ is the only solution (and hence of minimum weight too) for the instance whose demand graph is $D^{*}$. Hence taking $D^{**}=D^*$ suffices.

This implies that for the perfect matching graph $PM_n=(V_n, D_n)$ (with the weight function specified above) we have $|\{ \mincut(D): D\subseteq V_n \times V_n, |D|=k\}|\geq \binom{n/2}{k}=\Omega_{k}(n^{k})$.

\section{Evaluation Schemes: Constructing Succinct Data Structures } \label{sec:dataStructures}

Gomory and Hu~\cite{GH61} showed that for every undirected edge-weighted graph
$G=(V,E,w)$ there is a tree $\mathcal T = (V,E',w')$ that represents
the minimum $st$-cuts exactly both in terms of the \emph{cut-values}
and in terms of their \emph{vertex-partitions}.
The common terminology for the first property,
probably due to \Benczur~\cite{benczur},
is to say that $\mathcal T$ is flow-equivalent to $G$.
The second property, which is actually stronger,
says that $\mathcal T$ is cut-equivalent to $G$.
\footnote{
We say that $\mathcal T$ is flow-equivalent to $G$ when for every $s,t\in V$
the minimum $st$-cut value in $\mathcal T$ is exactly the same as in $G$.
We say it is cut-equivalent to $G$
when every vertex-partitioning that attains a minimum $st$-cut
in $\mathcal T$, also attains a minimum $st$-cut in $G$.
}

These (flow-equivalent and cut-equivalent) trees can be viewed more generally
as succinct data structures that support certain queries,
either for the value of an optimal cut, or for its vertex-partition.

Motivated by this view, we define two types of data structures, which we call
a \emph{flow-evaluation scheme} and a \emph{cut-evaluation scheme}
(analogously to the common terminology in the literature).
These schemes are arbitrary data structures (e.g., need not form a tree),
and address the terminals version (of some cut problem).
Both of these schemes, first preprocess an input that consists
of a graph $G=(V,E,w)$, a terminals set $T\subset V$,
and a collection of demand graphs $\mathcal D$.
The preprocessed data can then be used (without further access to $G$)
to answer a cut query given by a demand graph $D\in D$.
The answer of a \emph{flow-evaluation scheme} is the corresponding
minimum cut-value $\mincut(D)$.
 The answer of a \emph{cut-evaluation scheme} is a vertex-partition
that attains this cut-value $\mincut(D)$. Formally, we define the following.

\begin{definition} \label{def:scheme}
A {\em flow-evaluation scheme} is a data structure that supports the following two operations.
\begin{enumerate}
	\item Preprocessing $P$, which gets as input a graph $G=(V,E,w)$, a set of terminals $T \subseteq V$ and a family ${\cal D}$ of demand graphs on $T$ and constructs a data structure $P(G,T,{\cal D})$.
	\item Query $Q$, which gets as input $D \in {\cal D}$ and uses $P(G,T,{\cal D})$ to output $\mincut(D)$. Note that $Q$ has no access to $G$ itself. \label{item:cutScheme}
\end{enumerate}
A {\em cut-evaluation scheme} also supports a third operation.
\begin{enumerate}
\setcounter{enumi}{2}
	\item Query $Q'$, which gets as input $D \in {\cal D}$ and uses $P(G,T,{\cal D})$ to output a partition $\Pi \in \pa(T)$ which attains $\mincut(D)$.
\end{enumerate}
\end{definition}

We provide below some constructions and lower bounds for flow-evaluation schemes and cut-equivalent schemes, for the three cut problems studied in this paper, viz. \gcut, \mwaycut and \mcut.
Note that all our upper bounds are for the stronger version of cut-evaluation schemes, and our lower bound is for the weaker version of flow-evaluation schemes for the \gcutlower problem. In order to measure bit complexity for the bounds, we assume hereafter that all weights are integers.

\subsection{Upper Bounds for Cut-Evaluation Schemes}

The next theorem follows from the \emph{terminal version} of Theorem~\ref{thm:group}. Similar results also hold for the \mwaycut and \mcut problems; the proofs follow in the same manner from Theorem~\ref{thm-multiway} and Theorem~\ref{thm-multicut} respectively.

\begin{theorem}
\label{thm:groupTerminalDS}
There exists a cut-evaluations scheme such that for every graph $G=(V,E,w)$, a set of terminals $T \subseteq V$ and $\alpha,\beta\in \mathbb{N}$, for the family ${\cal D} = \aset{ K_{A,B}: A,B \subseteq T, \; |A|=\alpha, \; |B|=\beta }$ of demand graphs at most $O_{\alpha,\beta}(|T|^{\alpha+\beta-1} \cdot (|T| + \log W))$ bits are stored, where $W = \sum_{e \in E}{w(e)}$, and such that the query time is $O_{\alpha,\beta}(|T|^{\alpha+\beta-1})$.
\end{theorem}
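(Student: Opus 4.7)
\medskip
\noindent\textbf{Proof proposal.}
The plan is to build a simple lookup-table scheme whose only non-trivial combinatorial ingredient is the terminal version of Theorem~\ref{thm:group}. In the preprocessing phase, I would first apply a symbolic tie-breaking to the edge weights, say $w'(e)=w(e)+\varepsilon_e$ for sufficiently small, generic $\varepsilon_e>0$, so that every vertex-partition of $V$ has a distinct cut-value under $w'$. For small enough $\varepsilon_e$'s, every $(A,B)$-cut that is optimal under $w'$ is also optimal under $w$, and its original integer cut-value equals $\mincut(K_{A,B})$. Next, for each pair $(A,B)$ with $A,B\subseteq T$, $|A|=\alpha$, $|B|=\beta$, a single max-flow computation yields the (now unique) optimal partition $\Pi_{A,B}$ in the perturbed graph. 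Let $\mathcal{P}$ be the set of distinct $\Pi_{A,B}$'s, stored together with its original integer cut-value $\mincut(K_{A,B})$.

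The counting step is the crux. Since all perturbed cut-values are distinct, every partition in $\mathcal{P}$ is tagged with a unique perturbed cut-value, and thus with a single original value $\mincut(K_{A,B})$. Therefore $|\mathcal{P}|$ is at most $|\{\mincut(K_{A,B}):A,B\subseteq T,\ |A|=\alpha,\ |B|=\beta\}|$, which is $O_{\alpha,\beta}(|T|^{\alpha+\beta-1})$ by the terminal version of Theorem~\ref{thm:group} applied to the perturbed graph. Each stored entry consists of a $2$-partition of $T$ (encoded by $|T|$ bits, one per terminal) together with an integer cut-value ($O(\log W)$ bits), so the total storage is $O_{\alpha,\beta}(|T|^{\alpha+\beta-1}(|T|+\log W))$ bits, as required.

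The query procedure for $D=K_{A,B}$ is then a straightforward sweep: iterate over every $(\Pi,v)\in\mathcal{P}$, check in $O(\alpha+\beta)$ time whether $\Pi$ places all of $A$ on one side and all of $B$ on the other, and return any feasible $\Pi$ attaining the smallest associated value $v$. Correctness is immediate: $\Pi_{A,B}$ itself lies in $\mathcal{P}$, is feasible for $D$, and carries value $\mincut(K_{A,B})$, while no feasible partition can have strictly smaller cut-value by definition of $\mincut$. The total query time is $|\mathcal{P}|\cdot O_{\alpha,\beta}(1)=O_{\alpha,\beta}(|T|^{\alpha+\beta-1})$.

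The only delicate point I anticipate is the bookkeeping around the perturbation. I would need to ensure (i) that Theorem~\ref{thm:group} still applies to the perturbed weights, which is automatic since its proof is oblivious to whether edge weights are integral, and (ii) that the $\varepsilon_e$'s never enter the stored data, which is guaranteed by keeping only the original integer values $\mincut(K_{A,B})$ in $\mathcal{P}$; the perturbation is purely a canonicalization device and can equivalently be realized by any deterministic lexicographic tie-breaking rule on partitions. The analogous cut-evaluation schemes for \mwaycut and \mcut fit the same template, with Theorem~\ref{thm-multiway} and Theorem~\ref{thm-multicut} playing the role of Theorem~\ref{thm:group}; the only change is that partitions of $T$ are now encoded using $O(|T|\log k)$ bits (which is still $O_k(|T|)$) and the feasibility check costs $O(k)$ or $O(k^2)$ time per stored partition, absorbed in the $O_k(\cdot)$ notation.
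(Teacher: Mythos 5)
Your proposal is correct and takes essentially the same route as the paper: store one (partition, value) entry per distinct cut value --- $O_{\alpha,\beta}(|T|^{\alpha+\beta-1})$ entries of $O(|T|+\log W)$ bits each, by the terminals version of Theorem~\ref{thm:group} --- and answer a query by a linear sweep that returns the cheapest stored partition separating $A$ from $B$. Your explicit perturbation/tie-breaking step is a welcome extra precaution (it guarantees that the stored partition carrying the value $\mincut(K_{A,B})$ is in fact feasible for $K_{A,B}$, which the paper's terser argument implicitly relies on), but the data structure, the bounds, and the query procedure are the same.
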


\begin{proof}
Let $q$ be the number of distinct values attained by demand graphs on $T$. Applying the upper bound of Theorem~\ref{thm:group} adjusted for the terminals version, we get that $q \le O_{\alpha,\beta}(|T|^{\alpha+\beta-1})$.
Order the demand graphs $\{K_{A_1,B_1}, \ldots, K_{A_q,B_q}\}$ such that $\mincut(K_{A_i,B_i}) < \mincut(K_{A_j,B_j})$ for all $i<j$. For every $j \in [q]$ we associate with $K_{A_j,B_j}$ a partition $\Pi_j \in \pa_2(T)$ which attains $\mincut(K_{A_j,B_j})$. Representing $\Pi_j$ as a bit vector of length $|T|$, we list all values in an increasing order. Each entry of this structure is of size $O(|T| + \log W)$.  Therefore the size of the evaluation scheme is at most $O_{\alpha,\beta}(|T|^{\alpha+\beta-1} \cdot (|T| + \log W))$ bits.
To see the bound of the query time, note that given $A,B \subseteq T$ such that $|A| = \alpha$ and $|B|=\beta$, the evaluation scheme holds $\mincut(K_{A,B})$ and a partition $\Pi$ that attains it. Moreover, going over the list, $\mincut(K_{A,B})$ is the first value in the list for which $\Pi$ agrees with the associated partition.
\end{proof}

\subsection{Lower Bound on Flow-Evaluation Schemes for \gcutlower}

Next we use an information-theoretic argument which shows a lower bound
on the storage required by any flow-evaluation scheme for \gcutlower.
Since a cut-evaluation scheme is stronger than a flow-evaluation scheme,
this lower bound immediately extends also to cut-evaluation schemes.

\begin{theorem}\label{th:LBMain}
For every $n \ge 3$, a flow-evaluation scheme for \gcutlower on graphs with $n$ terminals (in which $T=V$) and with edge-weights bounded by a polynomial in $n$ requires storage of $\Omega(n^2 \log n)$ bits.
\end{theorem}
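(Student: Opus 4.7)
The plan is an information-theoretic encoding argument: I will exhibit an explicit family $\mathcal F$ of $n$-vertex graphs with polynomially bounded integer weights such that any two members of $\mathcal F$ induce distinct $(2,1)$-group-cut query functions on $T=V$. A correct flow-evaluation scheme must assign distinct stored bit-strings to graphs whose query functions differ, so its storage is at least $\log_2 |\mathcal F|$ bits. Concretely, fix a pivot vertex $v_0$ and label the remaining vertices $v_1,\ldots,v_{n-1}$. Set the heavy weight $M := n^5$, and parametrize $\mathcal F$ by light weights $\bar w = (w_{ij}) \in \{1,2,\ldots,n\}^{\binom{n-1}{2}}$, indexed by pairs $1 \le i<j \le n-1$. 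The graph $G_{\bar w}$ has an edge of weight $M$ joining $v_0$ to each $v_i$, and an edge of weight $w_{ij}$ joining each pair $v_i,v_j$ with $1\le i<j\le n-1$. All weights are polynomial in $n$ and $|\mathcal F| = n^{\binom{n-1}{2}}$, so $\log_2|\mathcal F| = \Omega(n^2 \log n)$.

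Next I would decode the query function to recover $\bar w$, using only demands of the form $K_{\{v_i,v_j\},\{v_0\}}$. Since $M$ dwarfs the total light-edge weight, moving any non-pivot vertex across such a cut introduces an extra heavy edge of weight $M$ that cannot be compensated for by light-edge savings of only $\poly(n)$; hence the unique optimal 2-partition separating $\{v_i,v_j\}$ from $\{v_0\}$ is $(\{v_i,v_j\}, V\setminus\{v_i,v_j\})$, with value
\[
\mincut_{G_{\bar w}}(K_{\{v_i,v_j\},\{v_0\}}) \;=\; 2M + s_i + s_j - 2w_{ij}, \qquad s_i := \sum_{1\le k\le n-1,\; k\ne i} w_{ik}.
\]
These $\binom{n-1}{2}$ query answers form an affine-linear image of $\bar w$: summing all of them recovers $\sum_{i<j} w_{ij}$, summing over a fixed $i$ then recovers each $s_i$, and solving gives each $w_{ij}$. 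Hence the affine map is injective once $n \ge 6$, so distinct graphs in $\mathcal F$ induce distinct query functions, and the lower bound $\Omega(n^2 \log n)$ follows.

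The main obstacle is verifying the optimal-partition claim cleanly. One must rule out 2-partitions with more than two non-pivot vertices on the $v_0$-separated side (a direct computation shows that adding any such vertex $v_k$ increases the cut by $M + s_k - 2(w_{ik}+w_{jk}) \ge M - \poly(n) > 0$), and $k$-partitions with $k>2$ (handled by the standard observation that merging parts can only decrease the cut cost, so the optimum is attained by a 2-partition). Both reductions hinge on the gap $M \gg \sum w_{ij}$, which forces every minimum cut to cross exactly the two heavy edges $(v_0,v_i)$ and $(v_0,v_j)$ and nothing else incident to $v_0$.
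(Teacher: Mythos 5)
Your proof is correct, and it takes a genuinely different route from the paper's. The paper also runs an information-theoretic encoding argument on the complete graph, but its hard family has weights that decrease geometrically along a Hamiltonian path (the ``path edges'') with small random weights on the remaining ``fork edges''; recovering the weights from the answers to queries $K_{\{1,j\},\{i\}}$ then requires a somewhat delicate induction on $j-i$, repeatedly expressing sums of the form $\sum_{m\notin\{i,\dots,j\}} w_{mk}$ via previously recovered weights. Your construction --- a heavy star of weight $M=n^5$ at a pivot $v_0$ plus arbitrary light weights $w_{ij}\in\{1,\dots,n\}$ on the periphery --- makes the optimal partition for $K_{\{v_i,v_j\},\{v_0\}}$ immediately transparent (both heavy edges incident to $A$ must be cut, a third heavy edge is never worth cutting, and keeping $v_i,v_j$ together saves $w_{ij}$ once more), and the decoding reduces to inverting the explicit affine map $(w_{ij})\mapsto(s_i+s_j-2w_{ij})$ by two rounds of averaging. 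This buys a cleaner and more modular argument at the cost of the map being invertible only for $n\ge 6$ (your $m-3$ and $m-4$ denominators), which is harmless since $\Omega(n^2\log n)$ is an asymptotic claim; the paper's construction has no such threshold but hides the same $\Omega(n^2\log n)$ entropy count behind a longer induction. Both arguments use only legitimate $(2,1)$-demands and weights polynomial in $n$, and both correctly conclude that distinct query functions force distinct preprocessed strings, hence storage at least $\log_2 n^{\binom{n-1}{2}}=\Omega(n^2\log n)$.
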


Let $3 \le n \in \mathbb{N}$, let ${\cal D} = \{K_{A,B} : \; A,B \subset [n], \; |A|=2,\; |B|=1\}$ and let $G=(V,E)$ be the complete graph on $V=T=[n]$.
For every $j \in [n-1]$, $(j,j+1) \in E$ are referred to as {\em path edges}, and the rest of the edges are referred to as {\em fork} edges as demonstrated in Figure~\ref{fig:pathFork}.

\begin{figure}[ht]
  \begin{center}
  \includegraphics[scale=0.80]{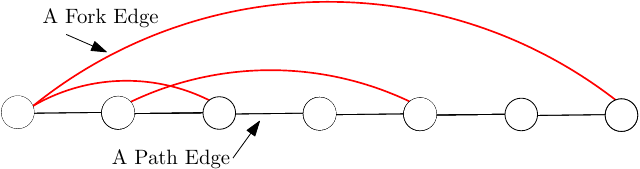}
    \caption{Edge types in $G$. Black edges correspond to {\em path} edges, while red edges correspond to {\em fork} edges (we illustrate only a few fork edges for simplicity).}
\label{fig:pathFork}
  \end{center}
\end{figure}
To prove Theorem~\ref{th:LBMain} we assign random edge weights to the graph in the range $\{1,\ldots,2n^5\}$. We then show that given query access to $\{\mincut(D) : D \in {\cal D}\}$ we can recover all edge weights. This, in turn, implies that we can recover at least $\Omega(n^2 \log n)$ bits, and thus implies Theorem~\ref{th:LBMain}.

We assign edge weights to the edges of $G$ as follows:

\begin{enumerate}[(P1)]
\item For every $j \in [n]$, $w(j,j+1)$ is chosen uniformly at random in $[2(n-j)n^4, (2(n-j)+1)n^4]$. This ensures that the weights of the path edges are non-increasing as we go from left to right. Moreover, whenever $j>i$,
$$w(j,j+1)\leq (2(n-j)+1)n^4 = (2n-2j+1)n^4< (2n-2i)n^4 - n^4 \leq w(i,i+1) - n^4\;.$$

\item The weights of all fork edges are chosen uniformly at random from $\{0,1,2,\ldots,n-1\}$. Thus the total weight to all fork edges is at most $n^3$. Note that this is strictly smaller than the difference between weights of any two path edges (which is at least $n^4$).
\end{enumerate}

\begin{claim} \label{c:cut}
\ifprocs
Let $1 \le i < j \le n$, then $\mincut(K_{\{1,j\},\{i\}}) = \sum\limits_{e \in E : | e \cap \{i,\ldots,j-1\}|=1}{w_e}$.
\else
Let $1 \le i < j \le n$, then $\mincut(K_{\{1,j\},\{i\}}) =\sum\limits_{e \in E : | e \cap \{i,\ldots,j-1\}|=1}{w_e}$.
\fi
\end{claim}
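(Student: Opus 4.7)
\emph{Plan.} The plan is to exhibit an explicit cut matching the claimed value and then use the weight hierarchy from (P1) and (P2) to rule out every other feasible partition. Let $S \eqdef \{i, i+1, \ldots, j-1\}$ and $\Pi^{*} \eqdef \{S, V \setminus S\}$. Since $i \in S$ while $1, j \in V \setminus S$ (which tacitly requires $i \ge 2$, without which the demand is infeasible), the partition $\Pi^{*}$ agrees with the demand graph $K_{\{1,j\},\{i\}}$; by construction $\cut(\Pi^{*}) = \sum_{e\in E : |e \cap S| = 1} w_e$, already giving the upper bound on $\mincut(K_{\{1,j\},\{i\}})$.

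For the matching lower bound I would take an arbitrary feasible partition $\Pi$, encoded by $P \colon V \to \{0,1\}$ with $P(1) = P(j) = 0$ and $P(i) = 1$, and look at its path-edge cut inside each of the three consecutive ranges $[1, i-1]$, $[i, j-1]$, $[j, n-1]$. Since $P$ switches parity over the first two ranges (from $0$ to $1$, then from $1$ back to $0$), $\Pi$ must cut at least one path edge in each. By the strict monotonicity of path-edge weights in (P1), the cheapest single path edge in $[1, i-1]$ is $(i-1, i)$ and the cheapest in $[i, j-1]$ is $(j-1, j)$, so the path-edge contribution to $\cut(\Pi)$ is at least $w(i-1,i) + w(j-1,j)$, matching that of $\Pi^{*}$.

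The decisive step is a weight-gap argument. If $\Pi$ cuts exactly the two path edges $(i-1, i)$ and $(j-1, j)$ and no others, then $P$ is constant on each of $[1, i-1]$, $[i, j-1]$, $[j, n]$; feasibility forces these constants to be $0, 1, 0$, and hence $\Pi = \Pi^{*}$. Otherwise $\Pi$ either replaces one of these two edges by a heavier path edge in the same range, or cuts an extra path edge, and in either case (P1) makes the path-edge cost exceed that of $\Pi^{*}$ by at least $n^4$ (each replacement inside a range costs at least $n^4$ more, and each additional path edge weighs at least $2n^4$). Since by (P2) the total weight of all fork edges is at most $\binom{n}{2}(n-1) < n^3 < n^4$, no rearrangement of fork-edge cuts can compensate, giving $\cut(\Pi) > \cut(\Pi^{*})$ for every $\Pi \ne \Pi^{*}$. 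The hardest part is the transition bookkeeping; once that is in place, the $n^4$-versus-$n^3$ separation between path and fork weights closes the argument immediately.
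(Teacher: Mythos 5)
Your proof is correct and follows essentially the same route as the paper's: exhibit the partition $C=\{i,\ldots,j-1\}$, argue that any feasible cut must use one path edge in each of the two segments $[1,i-1]$ and $[i,j-1]$ whose cheapest representatives are $(i-1,i)$ and $(j-1,j)$ by (P1), and use the $n^4$-versus-$n^3$ gap from (P2) to show fork edges cannot affect the comparison. Your write-up is in fact somewhat more careful than the paper's (the explicit exchange argument quantifying the $n^4$ excess, and the observation that the claim tacitly needs $i\ge 2$), but it is the same argument.
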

\begin{proof}
By Property (P2) the total weigtht of all the fork edges is at most $n^3$, which is less than the difference between weights of any two path edges (which is at least $n^4$). Hence, the value $\mincut(K_{\{1,j\},\{i\}})$ is determined only by which path edges we choose.

Note that the minimum cut separating $\{1,j\}$ from $\{i\}$ needs to pick at least one edge each from the paths $1-i$ and $i-j$. By Property (P1) the path edges have non-increasing weights going from left to right, and hence the two cheapest edges on the paths $1-2-\ldots-i$ and $i-(i+1)-\ldots-j$ are $(i-1,i)$ and $(j-1,j)$ respectively. Therefore, it follows that $\mincut(K_{\{1,j\},\{i\}}) = \cut(C,\overline{C})$, where $C=\{i,i+1, \ldots, j-1\}$.
By definition of $\cut(C,\overline{C})$ it follows that $$\mincut(K_{\{1,j\},\{i\}}) = \sum\limits_{e \in E : | e \cap \{i,\ldots,j-1\}|=1}{w_e}$$
%
%
%
\end{proof}

\begin{lemma}\label{l:recover}
Given access to queries $Q$ as in Definition~\ref{def:scheme}, we can recover $w$.
\end{lemma}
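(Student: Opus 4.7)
My plan is to reconstruct every edge weight $w_e$ from the cut queries by reducing the problem to a finite-difference calculation on cut-values of intervals. For $1 \le a \le b \le n$ write $c(a,b) := \cut([a,b], [n] \setminus [a,b])$, with the convention $c(1,n) := 0$ and $c(a,n) := c(1,a-1)$ for $a \ge 2$ (complementary sets yield equal cut-values). The reconstruction has two stages: first, extract all the interval cut-values $c(a,b)$ from the available queries; then invert a small $2 \times 2$ finite-difference identity to read off each $w_e$.

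\textbf{Stage 1: from queries to interval cut-values.} By Claim~\ref{c:cut}, $c(a,b) = \mincut(K_{\{1, b+1\}, \{a\}})$ for every $2 \le a \le b \le n-1$. To also obtain the left-boundary values $c(1, b)$ with $b \in [1, n-2]$, I will repeat the min-cut analysis behind Claim~\ref{c:cut}, applied to queries of the form $\mincut(K_{\{x, y\}, \{1\}})$ with $\min(x,y) = b+1$: by Properties (P1)--(P2), consecutive path-edge weights differ by $\Omega(n^4)$ while the total fork weight is only $O(n^3)$, so the optimum cuts exactly one path edge, namely the cheapest one still separating $\{1\}$ from $\{x,y\}$, which is $(b, b+1)$; this yields the cut of $[1,b]$ from $[b+1, n]$, i.e.\ $c(1,b)$. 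The single remaining boundary value $c(1, n-1) = \deg(n)$ is obtained analogously from $\mincut(K_{\{1, j\}, \{n\}})$, whose minimum cut is forced to isolate $\{n\}$ from the rest.

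\textbf{Stage 2: finite-difference recovery.} With every $c(a,b)$ in hand, for each non-path edge $(a, v)$ with $1 \le a$ and $v \ge a+2$ (including $v = n$) I will apply the identity
\[
  2\, w(a, v) \;=\; c(a, v-1) \;-\; c(a+1, v-1) \;-\; c(a, v) \;+\; c(a+1, v),
\]
proved by a routine case analysis on an arbitrary edge $(u, u')$: its contributions to the four terms cancel in pairs unless $\{u, u'\} = \{a, v\}$, in which case it contributes $2 w(a, v)$. For the $n-1$ path edges $w(a, a+1)$ I will use the degenerate version $2\, w(a, a+1) = c(a, a) + c(a+1, a+1) - c(a, a+1)$, which follows from $c(a,a) = \deg(a)$. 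Together these recover every entry of $w$. The main obstacle I anticipate is Stage~1: Claim~\ref{c:cut} is stated only for $i < j$ and so only delivers cut-values of intervals strictly inside $[2, n-1]$, whereas recovering the edges incident to vertex $1$ or $n$ requires the boundary values $c(1, b)$. Extracting these demands a parallel min-cut analysis for queries whose singleton lies at the boundary, which is essentially a re-reading of the proof of Claim~\ref{c:cut} via (P1)--(P2), but must be carried out explicitly; once all interval cut-values are available, Stage~2 is purely arithmetic.
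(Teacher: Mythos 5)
Your argument is correct, and its core step is genuinely different from the paper's. The paper recovers $w_{ij}$ by induction on $j-i$: its base case is exactly your path-edge identity $2w(i,i+1)=c(i,i)+c(i+1,i+1)-c(i,i+1)$, but its inductive step isolates $w_{ij}$ from $\mincut(K_{\{1,j+1\},\{i\}})$ only after subtracting all previously recovered weights $w_{pq}$ with $q-p<j-i$, using the degree queries $\mincut(K_{\{1,k+1\},\{k\}})$ for the interior vertices $k$. Your second-difference identity $2w(a,v)=c(a,v-1)-c(a+1,v-1)-c(a,v)+c(a+1,v)$ removes the induction entirely: each weight is read off from just four interval cut-values with no dependence on other recovered weights (the case analysis checks out --- an edge $(u,u')$ contributes $0$ to the alternating sum unless $\{u,u'\}=\{a,v\}$, where it contributes $2w(a,v)$), which is cleaner and makes the linear-algebraic content of the recovery explicit. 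Both proofs face the same boundary issue: Claim~\ref{c:cut} only delivers $c(a,b)$ for $2\le a\le b\le n-1$, and the values $c(1,b)$ needed for edges incident to vertices $1$ and $n$ require queries whose singleton sits at an endpoint, namely $\mincut(K_{\{x,y\},\{1\}})$ and $\mincut(K_{\{1,j\},\{n\}})$, analyzed by rerunning the (P1)--(P2) argument. The paper largely waves at this (``the proof is similar for the case $j=n$''), whereas you identify it explicitly and sketch the correct resolution (the optimum cuts the single path edge $(b,b+1)$ and no other path edge, with fork-edge overhead below $n^3<n^4$); writing out that short argument is the only piece of your plan left to execute.
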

Lemma~\ref{l:recover} implies that we can recover $\Omega(n^2 \log n)$ random bits given access to queries $Q$ as in Definition~\ref{def:scheme}, and thus implies Theorem~\ref{th:LBMain}. 

\begin{proof}
For every $1 \le i < j \le n$, we show that we can recover $w_{ij}$. The proof continues by induction on $j-i$, starting with the case $j=i+1$.
By Claim~\ref{c:cut}, we get that $\mincut(K_{\{1,i+1\},\{i\}}) = \sum_{e \in E : i \in e}{w_e}$ and $\mincut(K_{\{1,i+2\},\{i+1\}}) = \sum_{e \in E : i+1 \in e}{w_e}$. Therefore
\begin{equation}
\mincut(K_{\{1,i+1\},\{i\}}) + \mincut(K_{\{1,i+2\},\{i+1\}})
= 2w_{i,i+1} + \sum_{e \in E : |e \cap \{i,i+1\}| = 1}{w_e}
\label{eq:sumCuts}
\end{equation}
In addition we have $\mincut(K_{\{1,i+2\},\{i\}}) = \sum_{e \in E : |e \cap \{i,i+1\}| = 1}{w_e}$. Plugging this into \eqref{eq:sumCuts} we get that
\ifprocs
\begin{equation*}
\begin{split}
w_{i,i+1} = \frac{1}{2}\left( \right.\mincut(K(\{1,i+1\},\{i\})) + &\mincut(K(\{1,i+2\},\{i+1\})) \\
&- \left. \mincut(K(\{1,i+2\},\{i\}))\right) \;,
\end{split}
\end{equation*}
\else
$$w_{i,i+1} = \frac{1}{2}\left( \mincut(K_{\{1,i+1\},\{i\}}) + \mincut(K_{\{1,i+2\},\{i+1\}}) - \mincut(K_{\{1,i+2\},\{i\}})\right) \;,$$
\fi
and therefore we can recover $w_{i,i+1}$. Next, let $1 \le i < j \le n$, and assume that we can recover $w_{pq}$ for all $1 \le p < q \le n$ such that $q-p < j-i$.
In addition, we assume that $j<n$. The proof is similar for the case $j=n$.
From Claim~\ref{c:cut} we get that
\begin{equation}
\mincut(K_{\{1,j+1\},\{i\}})= \sum_{e \in E : | e \cap \{i,\ldots,j\}|=1}{w_e} = \sum_{k=i}^j{\sum_{m \notin \{i,\ldots,j\}}{w_{mk}}}
\label{eq:cut}
\end{equation}
Let $i < k < j$, then by Claim~\ref{c:cut}
\ifprocs
\begin{equation*}
\begin{split}
\sum_{m \notin \{i,\ldots,j\}}{w_{mk}} &= \sum_{e \in E : k \in e}{w_e} -  \sum_{m \in \{i,\ldots,j\}}{w_{mk}} \\
&= \mincut(K(\{1,k+1\},\{k\})) - \sum_{m \in \{i,\ldots,j\}}{w_{mk}} \;.
\end{split}
\end{equation*}
\else
$$\sum_{m \notin \{i,\ldots,j\}}{w_{mk}} = \sum_{e \in E : k \in e}{w_e} -  \sum_{m \in \{i,\ldots,j\}}{w_{mk}} = \mincut(K_{\{1,k+1\},\{k\}}) - \sum_{m \in \{i,\ldots,j\}}{w_{mk}} \;.$$
\fi
For every $m \in \{i,\ldots,j\}$, $|k-m| < j-i$, and therefore we can recover $w_{mk}$. It follows that we can recover $\sum_{m \notin \{i,\ldots,j\}}{w_{mk}}$.
Plugging this into \eqref{eq:cut}, and rearranging we get that
\ifprocs
\begin{equation}
\begin{split}
\mincut(K(\{1,j+1\},i)) &- \sum_{i<k<j}{\sum_{m \notin \{i,\ldots,j\}}{w_{mk}}} \\
&= \sum_{m \notin \{i,\ldots,j\}}{w_{im}} + \sum_{m \notin \{i,\ldots,j\}}{w_{jm}}
\end{split}
\label{eq:cut2}
\end{equation}
\else
\begin{equation}
\mincut(K_{\{1,j+1\},\{i\}}) - \sum_{i<k<j}{\sum_{m \notin \{i,\ldots,j\}}{w_{mk}}}= \sum_{m \notin \{i,\ldots,j\}}{w_{im}} + \sum_{m \notin \{i,\ldots,j\}}{w_{jm}}
\label{eq:cut2}
\end{equation}
\fi
It remains to show that we can recover $w_{ij}$ assuming we can recover $\sum_{m \notin \{i,\ldots,j\}}{w_{im}} + \sum_{m \notin \{i,\ldots,j\}}{w_{jm}}$. Applying Claim~\ref{c:cut} once more, we get that
\ifprocs
\begin{equation*}
\begin{split}
\sum_{m \notin \{i,\ldots,j\}}{w_{im}} &= \sum_{e \in E : i \in e}{w_e} -  \sum_{m \in \{i,\ldots,j\}}{w_{im}}\\
&=\mincut(K(\{1,i+1\},\{i\})) - w_{ij} - \sum_{m \in \{i,\ldots,j-1\}}{w_{im}} \;,
\end{split}
\end{equation*}
\else
$$\sum_{m \notin \{i,\ldots,j\}}{w_{im}} = \sum_{e \in E : i \in e}{w_e} -  \sum_{m \in \{i,\ldots,j\}}{w_{im}}=\mincut(K_{\{1,i+1\},\{i\}}) - w_{ij} - \sum_{m \in \{i,\ldots,j-1\}}{w_{im}} \;,$$
\fi

and similarly
$$\sum_{m \notin \{i,\ldots,j\}}{w_{jm}} = \mincut(K_{\{1,j+1\},\{j\}}) - w_{ij} - \sum_{m \in \{i+1,\ldots,j\}}{w_{im}} \;.$$
By the induction hypothesis, we can recover $w_{im}$ for all $m \in \{i,\ldots,j-1\}$ and $w_{jm}$ for all $m \in \{i+1,\ldots,j\}$, and therefore we can recover $w_{ij}$.
\end{proof}
This completes the proof of Theorem~\ref{th:LBMain}.
We note that similar arguments give a lower bound of $\Omega(n^3 \log n)$ by allowing weights which are exponential in $n^3$.
Details omitted.

\section{Future Directions}

A natural direction for future work is to construct better data structures
for the problems discussed in this paper.
Our tight bounds on the number of distinct cut values (redundancy factor)
yield straightforward schemes with improved storage requirement,
as described in Section~\ref{sec:dataStructures}.
But one may potentially improve these schemes in several respects.
First, our storage requirement exceeds by a factor of $|T|$
the number of distinct cut values.
The latter (number of distinct cut values)
may be the ``right bound'' for storage requirement,
and it is thus important to prove storage lower bounds;
we only proved this for \gcutlower.
Second, it would be desirable to achieve fast query time,
say sublinear in $|T|$ or perhaps even constant.
Third, one may ask for a distributed version of the data structure
(i.e., a labeling scheme) that can report the same cut values;
this would extend the known results \cite{KKKP05} for minimum $st$-cuts.
All these improvements require better understanding of the structure
of the optimal vertex partitions (those that attain minimum cut values).
Such structure is known for minimum $st$-cuts,
where the Gomory-Hu tree essentially shows the existence of
a family of minimum $st$-cuts, one for each $s,t \in V$, which is laminar.

Another very interesting question is to explore approximation
to the minimum cut, i.e., versions of the above problems where
we only seek for each instance a cut within a small factor of the optimal.
For instance, the cut values of \gcutab can be easily approximated
within factor $\alpha\cdot \beta$ using Gomory-Hu trees,
which requires storage that is linear in $|T|$,
much below the aforementioned ``right bound'' $|T|^{\alpha+\beta-1}$.
Can a better approximation be achieved using similar storage?

\bibliographystyle{alphaurlinit}
\bibliography{cuts}

\appendix

\section{No Non-trivial Redundancy for Directed Graphs}
\label{app:directed-graphs}

In this section, we consider the directed versions of the three cut problems considered in this paper, viz. the \gcut, \mwaycut, \mcut. Note that in directed graphs, an $s\rightarrow t$ cut is a set of edges whose removal ensures there is no $s\rightarrow t$ path. We construct an infinite family of graphs which have no non-trivial redundancy for any of these problems, i.e., the number of distinct cut values is asymptotically equal to the total number of instances.

Let $n\in \mathbb{N}$, and let $X,Y$ be two disjoint $n$-element sets. Consider the graph $G_n \eqdef K_{X \to Y}$, which is the orientation of the complete bipartite graph $K_{X,Y}$ obtained by orienting each edge from a vertex of $X$ towards a vertex of $Y$. We assign edge weights in $G_n$ in such a manner that every set of edges has distinct weight (for example, we may assign each edge a distinct power of $2$).


\paragraph{\gcut in Directed Graphs.}
In the directed version of the \gcutab problem, given sets $A,B \subseteq V$ such that $|A|=\alpha$ and $|B|= \beta$,
we want to find a set of edges of minimum weight whose removal ensures there is no path from any vertex of $A$ to any vertex of $B$.
The total number of demand graphs for $G_n$ is therefore
$|\{K_{A\rightarrow B}: |A|=\alpha,|B|=\beta\}|= \binom{2n}{\alpha}\cdot \binom{2n-\alpha}{\beta} = \Theta_{\alpha,\beta}(n^{\alpha+\beta})$.
Let $A \subseteq X,B \subseteq Y$ be such that $|A| = \alpha, |B|= \beta$. A minimum $(A,B)$-cut must include all the edges of $K_{A\rightarrow B}$, and furthermore these edges are enough. By choice of weights, the value of the minimum $(A,B)$-cut is unique. This implies that in $G_n$ we have $|\distinct\{ \mincut(K_{A\rightarrow B}): |A|=\alpha, |B|=\beta\}|\geq \binom{n}{\alpha} \cdot \binom{n}{\beta} = \Omega_{\alpha,\beta}(n^{\alpha+\beta})$
, and hence there is no non-trivial redundancy.

We note that for the special case of $st$-cuts in directed graphs (i.e. \gcutab with $\alpha = \beta =1$), Lacki \etal \cite{LNSW12} show that there exists an infinite family of {\em planar} graphs, which have no non-trivial redundancy. That is, for every graph in the family there are $\Theta(|V|^2)$ distinct $st$-cuts.

\paragraph{\mwaycut in Directed Graphs.}
In the directed version of the \mwaycutk problem, given a $k$-element set $S \subseteq V$ we want to find a set of edges of minimum weight whose removal ensures there is no $s\rightarrow s'$ path for any distinct $s,s'\in S$.
Let $k \le n$ be even. The number of instances $S$ in $G_n$ is $\binom{2n}{k} = \Theta_{k}(n^{k})$.
Let $A \subseteq X , B \subseteq Y$ be such that $|A|=|B|=k/2$, and let $S = A \cup B$. Then $|S|=k$, and therefore constitutes an instance for the directed \mwaycutk problem. For this instance any multiway cut must include all the edges of $K_{A\rightarrow B}$, and furthermore these edges are enough. Therefore the number of distinct cut values for the directed \mwaycut problem is at least $|\distinct\{ \mincut(K_{A\rightarrow B}): A \subseteq X, B \subseteq Y, |A|=|B|=k/2 \}| = \binom{n}{k/2}\cdot \binom{n}{k/2} = \Omega_{k}(n^{k})$.

\paragraph{\mcut in Directed Graphs.}
In the directed version of the \mcutk problem, given a set of demands $D\subseteq V\times V$ such that $|D|=k$, we want to find a set of edges of minimum weight whose removal ensures there is no $s\rightarrow s'$ path for any $(s,s')\in D$.
The total number of such demand graphs for $G_n$ is $|\{D\subseteq V\times V : |D|=k\}|= \binom{2n(2n-1)}{k} =\Theta_{k}(n^{2k})$.
Let $D$ be a set of edges of $G_n$ such that $|D| = k$. It is evident that the minimum set of edges satisfying $D$ is, in fact, $D$ itself. By the choice of weights, the number of distinct values is at least $|\distinct\{ \mincut(D): D\subseteq V\times V, |D|=k\}|\geq \binom{n^{2}}{k}= \Omega_{k}(n^{2k})$.

\end{document}